\documentclass{article}

\usepackage[preprint]{neurips_2026}

\usepackage[utf8]{inputenc} %
\usepackage[T1]{fontenc}    %
\usepackage{hyperref}       %
\usepackage{url}            %
\usepackage{booktabs}       %
\usepackage{amsfonts}       %
\usepackage{nicefrac}       %
\usepackage{microtype}      %
\usepackage{xcolor}         %
\usepackage{amsmath}
\usepackage{amssymb}
\usepackage{mathtools}
\usepackage{amsthm}
\usepackage{xspace}
\usepackage{multicol}
\usepackage{multirow}
\usepackage{enumitem}
\usepackage{caption}
\usepackage{graphicx}
\usepackage{wrapfig}

\DeclareMathOperator*{\argmax}{arg\,max}
\newcommand{\ie}{\emph{i.e.,}\xspace}
\newcommand{\eg}{\emph{e.g.,}\xspace}

\newcommand{\wrt}{w.r.t.\xspace}
\newcommand{\ignore}[1]{}
\newcommand{\paratitle}[1]{\vspace{1.5ex}\noindent\textbf{#1}}

\usepackage[capitalize,noabbrev]{cleveref}

\theoremstyle{plain}
\newtheorem{theorem}{Theorem}[section]
\newtheorem{proposition}[theorem]{Proposition}
\newtheorem{lemma}[theorem]{Lemma}

\theoremstyle{definition}
\newtheorem{definition}[theorem]{Definition}
\newtheorem{assumption}[theorem]{Assumption}
\theoremstyle{remark}

\title{Expressiveness Limits of Autoregressive Semantic ID Generation in Generative Recommendation}

\author{
  Yupeng Hou\textsuperscript{1}
  \quad Haven Kim\textsuperscript{1}
  \quad Clark Mingxuan Ju\textsuperscript{2}
  \quad Eduardo Escoto\textsuperscript{1}
  \\
  \textbf{Neil Shah\textsuperscript{2}}
  \quad \textbf{Julian McAuley\textsuperscript{1}}
  \\
  \textsuperscript{1}University of California, San Diego
  \quad
  \textsuperscript{2}Snap Inc.
  \\
  \texttt{\{yphou,khaven,eduardo,jmcauley\}@ucsd.edu},
  \;\\
  \texttt{\{mju,nshah\}@snap.com}
}

\begin{document}

\maketitle

\begin{abstract}
  Generative recommendation (GR) models generate items by autoregressively producing a sequence of discrete tokens that jointly index the target item.
  However, this autoregressive generation process also induces a structured decoding space whose impact on model expressiveness remains underexplored.
  Specifically, token-by-token generation can be viewed as traversing a decoding tree induced by semantic ID tokens, where leaf nodes correspond to candidate items. 
  We observe that the item probabilities produced by GR models are strongly correlated with this tree structure: items that are close in the tree tend to receive similar probabilities for any given user, making it difficult to distinguish among them based on user-specific preferences.
  We further show theoretically that such structural correlations prevent GR models from representing even simple patterns that can be well captured by conventional collaborative filtering models.
  To mitigate this issue, we propose Latte, 
  a simple modification that injects a latent token before each semantic ID, reshaping the decoding space from a single tree into multiple latent-token-conditioned trees.
  This design creates multiple paths with varying tree distances between items, relaxing tree-induced probability coupling and yielding an average of $3.45\%$ relative improvement on NDCG@10.
  Our code is available at \url{https://github.com/hyp1231/Latte}.
\end{abstract}

\section{Introduction}

Generative recommendation (GR)~\cite{rajput2023tiger,zheng2024lcrec,deng2025onerec,he2025plum} tokenizes items as sequences of discrete tokens (semantic IDs or SIDs~\cite{tay2022dsi,wang2022nci,rajput2023tiger}). 
Unlike traditional models that score user-item preferences via representation similarity~\cite{hidasi2016gru4rec,kang2018sasrec}, 
GR models autoregressively generate SID tokens, scoring an item as the product of its tokens' conditional probabilities.
While existing research has largely focused on improving how items are tokenized~\cite{wang2024letter,zhu2024cost,hou2025actionpiece}, the autoregressive generation process itself is often taken for granted, despite directly determining how item scores are composed and constrained.

In this work, we take the first step towards understanding the expressive power of this autoregressive token generation process of GR models. Specifically, we investigate whether there exist specific user-item preference patterns that GR models struggle to express. Note that when calculating the probability of an item (\ie a sequence of tokens), items sharing the same initial tokens also share common terms in the probability calculation, leading to correlated item probabilities. This motivates us to formulate a \emph{decoding tree} view of the autoregressive process, where generating a token corresponds to traversing one level down the tree, and each leaf node represents a candidate item (with the path corresponding to a valid semantic ID). Based on this formulation, we investigate the relationship between learned item probabilities and the decoding tree structure.

\begin{figure}[t]
\centering
\includegraphics[width=0.98\linewidth]{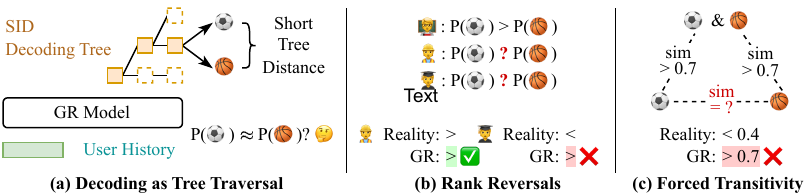}
\caption{Illustration of the expressive limitations induced by autoregressive SID generation. 
(a) Generating a SID can be viewed as traversing a decoding tree, where items sharing longer prefixes have shorter tree distances and tend to receive highly correlated generation probabilities. 
(b) shows that GR tends to assign consistent relative preference scores across users, while (c) shows that the tree structure can force two items to be similar even when their true similarity is low.
}
\vspace{-0.18in}
\label{fig:intro}
\end{figure}

Through empirical analysis, we find that if two items have a short distance in the decoding tree (\ie they share a long prefix), their predicted probabilities are strongly correlated for any given user. Intuitively, this implies that if a GR model predicts Alice prefers item A over item B (where A and B are close in the tree), it is highly likely to predict that Bob also prefers A over B. This contradicts the fundamental assumption of personalized recommendation that preferences should be user-specific. Driven by this observation, we theoretically demonstrate that there exist simple user-item preference patterns (\Cref{subsec:limit1-ui}) and item-item similarity relationships (\Cref{subsec:limit2-ii}) that GR models struggle to represent, thereby limiting their expressive power.

To alleviate this issue, we propose \textbf{Latte}, a simple yet effective modification to the standard SID generation approach. We introduce a small set of additional tokens, named \emph{\textbf{lat}ent \textbf{t}ok\textbf{e}ns}. During training, we inject a randomly sampled latent token before the target semantic ID. The concatenated sequence then serves as the new optimization target. This modification effectively reshapes the decoding structure from a single universal tree to multiple trees rooted at a shared hyper-root, where the second level corresponds to the latent tokens. This design enables multiple paths with varying tree distances between any pair of items, thus relaxing the strong correlation between tree structure and item probabilities in existing GR models. Experiments on benchmarks validate the effectiveness of the proposed method, leading to an average of $3.45\%$ relative improvement on NDCG@10.

Finally, we provide a further exploration of binding latent tokens with inductive biases. Specifically, in a multimodal scenario where each token corresponds to a modality of item features, we bind each latent token to a specific permutation of the semantic ID.
Empirically, we show that latent tokens associated with better-performing permutations indeed receive higher selection frequency, leading to improved overall recommendation performance.

\section{Preliminaries}

\paragraph{Semantic ID.} A SID refers to a sequence of discrete tokens that jointly index an item. Formally, a SID can be represented as a tuple $(c^{(1)}, c^{(2)}, \ldots, c^{(m)})$, where $m$ denotes the length of the SID. Each token $c^{(j)}$ is selected from a compact vocabulary $\mathcal{C}^{(j)}$ of size $M$ that is shared across all items.

\paragraph{Generative recommendation.} GR models aim to predict the next item $i_t$ with which a user will interact, given the historical interaction sequence $(i_1, i_2, \ldots, i_{t-1})$. In this framework, each item is tokenized into its corresponding semantic ID. Formally, given the semantic IDs for historical interactions $u=(c_1^{(1)}, \ldots, c_{t-1}^{(m)})$, the model is trained to generate the target semantic ID $(c_t^{(1)}, c_t^{(2)}, \ldots, c_t^{(m)})$ in a token-by-token manner. Therefore, the user-item preference score is often defined as the joint probability of generating the target Semantic ID:
\begin{equation}
    \mathbb{P}(i_t \mid u) = \textstyle\prod_{j=1}^{m} \mathbb{P}(c_t^{(j)} \mid c_t^{(1)}, \ldots, c_t^{(j-1)}, u).\label{eqn:p_item}
\end{equation}

\section{Limitations of Autoregressive SID Generation}\label{sec:analysis}

In this section, we analyze the expressive limitations of the autoregressive semantic ID generation process in generative recommendation models. We first formulate the generation process as a tree traversal procedure in~\Cref{subsec:tree-traversal}. Next, we empirically demonstrate a strong correlation between the decoding tree structure and item generation probabilities in~\Cref{subsec:correlation}. Based on these observations, we provide a theoretical analysis showing that such structural correlations constrain the expressive power of generative recommendation (GR) models, causing them to struggle with expressing specific patterns of user-item preferences (\Cref{subsec:limit1-ui}) and item-item similarities (\Cref{subsec:limit2-ii}).

\begin{figure*}[!t]
    \centering
    \includegraphics[width=0.32\textwidth]{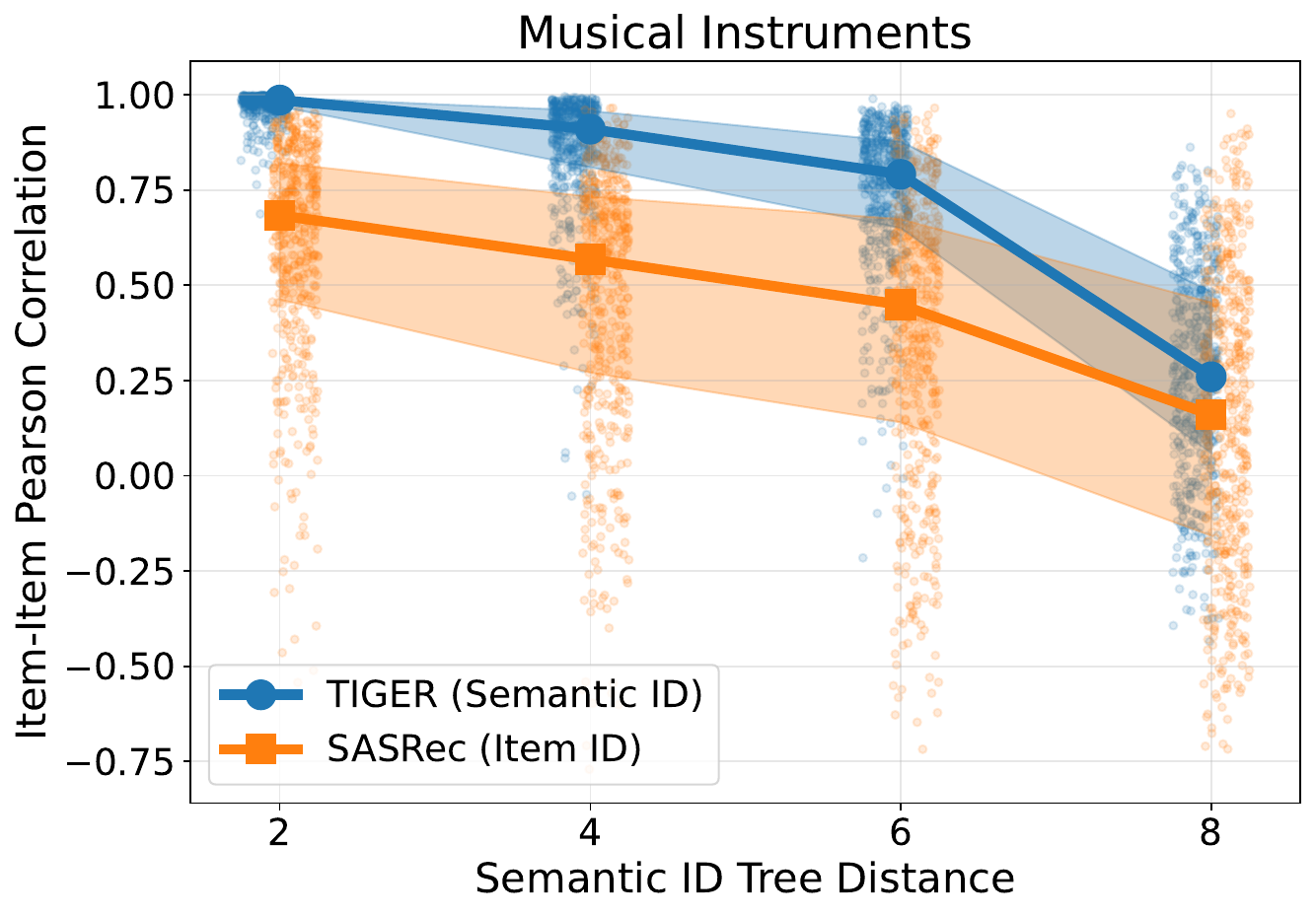}
    \includegraphics[width=0.32\textwidth]{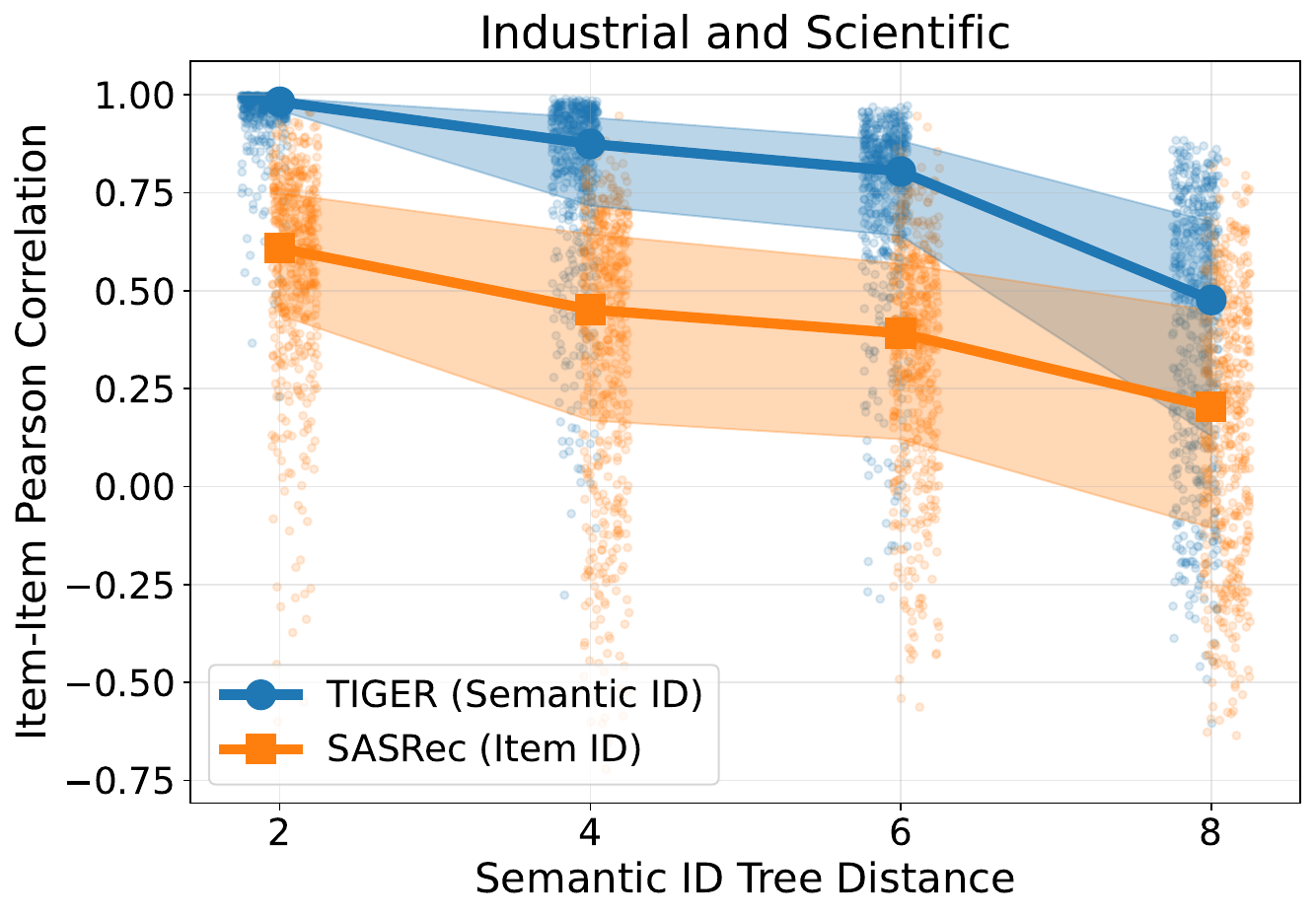}
    \includegraphics[width=0.32\textwidth]{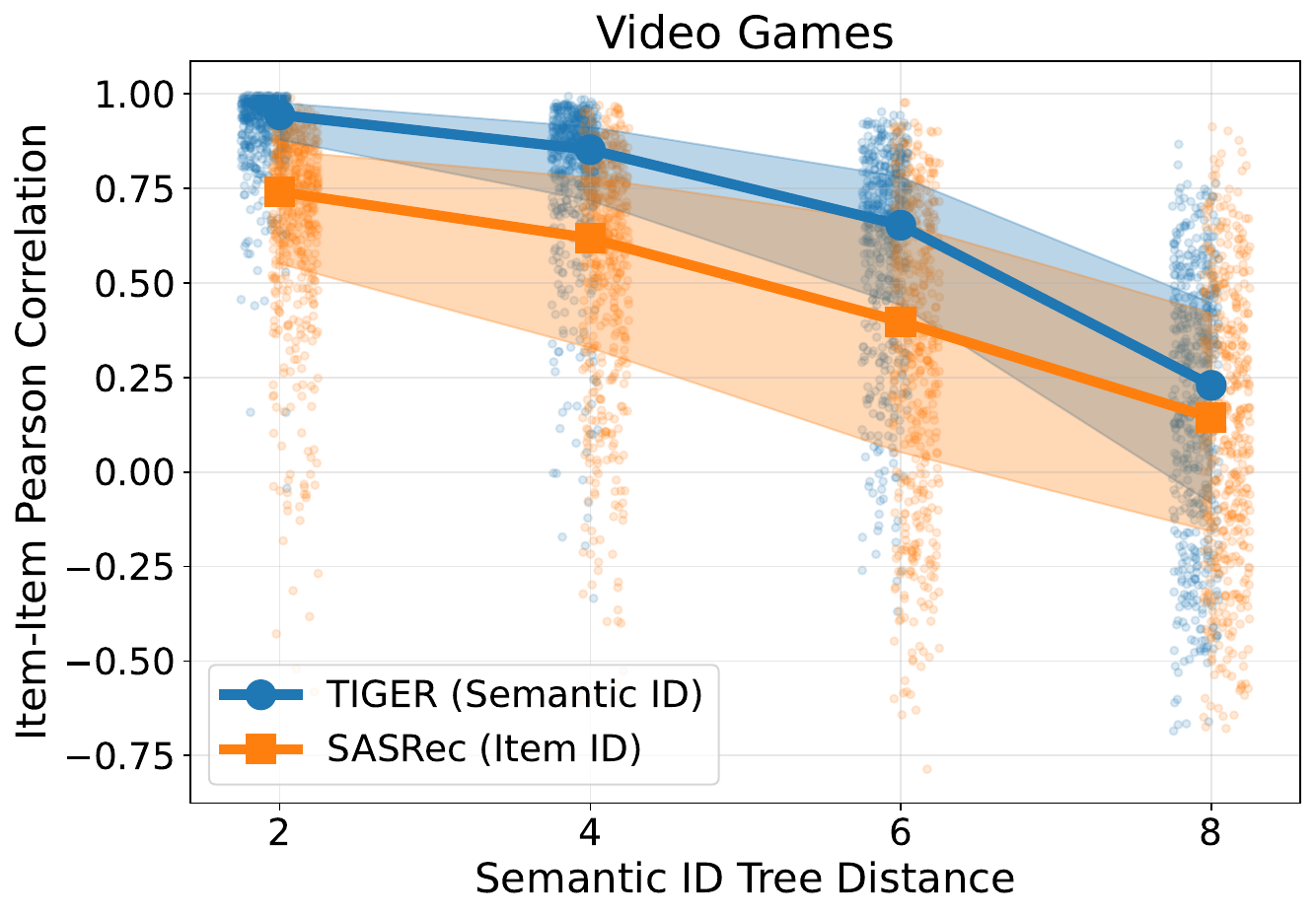}
    \caption{Correlation between tree distance and item generation probability similarity.}
    \label{fig:tree_distance_correlation}
\end{figure*}

\subsection{Generation as Tree Traversal}\label{subsec:tree-traversal}

Generative recommendation models predict the next item by autoregressively generating a sequence of semantic ID tokens. We can formulate this generation process as traversal along a decoding tree induced by the set of valid semantic IDs. We define the decoding tree as follows:
\begin{definition}[Decoding Tree]\label{def:decoding_tree}
    Let $\mathcal{T}$ be a decoding tree of depth $m$ formed by all valid semantic IDs. The root represents the empty sequence, and each node at depth $j$ corresponds to a unique prefix $(c^{(1)}, \ldots, c^{(j)})$. An edge connects a node to its child if the child's prefix extends the parent's by one valid token. Consequently, each leaf node uniquely represents an item.
\end{definition}
Under this formulation, the autoregressive generation of a semantic ID is equivalent to traversing from the root to a leaf node in the decoding tree. We adopt this perspective because it links the number of shared prefix tokens between two items to a well-defined distance metric:
\begin{definition}[Tree Distance]\label{def:tree_distance}
    Consider two items $i$ and $i'$ with semantic IDs $(c^{(1)}, \ldots, c^{(m)})$ and $(c'^{(1)}, \ldots, c'^{(m)})$, respectively. Let $k$ denote the length of the longest common prefix shared by these IDs. The tree distance between $i$ and $i'$ is defined as $d_{\mathcal{T}}(i, i') = 2(m-k)$, which corresponds to the number of edges on the unique path connecting the corresponding leaf nodes in $\mathcal{T}$.
\end{definition}

\subsection{Tree Distance \emph{vs.} Item Generation Probability}\label{subsec:correlation}

\paragraph{Motivation.} Based on~\Cref{eqn:p_item}, consider two items $i$ and $i'$ with a tree distance of $2(m-k)$, implying they share a prefix of length $k$. The item generation probabilities can be factored as follows:
\begin{align}
    \mathbb{P}(i \mid u) &= \colorbox{yellow!30}{$\textstyle\prod_{j=1}^{k} \mathbb{P}(c^{(j)} \mid c^{(1)}, \ldots, c^{(j-1)}, u)$} \nonumber\\
    &\quad \cdot \textstyle\prod_{j=k+1}^{m} \mathbb{P}(c^{(j)} \mid c^{(1)}, \ldots, c^{(j-1)}, u), \label{eqn:p_item_i}\\
    \mathbb{P}(i' \mid u) &= \colorbox{yellow!30}{$\textstyle\prod_{j=1}^{k} \mathbb{P}(c'^{(j)} \mid c'^{(1)}, \ldots, c'^{(j-1)}, u)$} \nonumber\\
    &\quad \cdot \textstyle\prod_{j=k+1}^{m} \mathbb{P}(c'^{(j)} \mid c'^{(1)}, \ldots, c'^{(j-1)}, u), \label{eqn:p_item_i2}
\end{align}
where $c^{(j)} = c'^{(j)}$ for all $j \leq k$ (the shared prefix). The highlighted terms in~\Cref{eqn:p_item_i,eqn:p_item_i2} are identical, as they depend solely on the common prefix. This factorization suggests that items with smaller tree distances (longer shared prefixes) will likely exhibit more similar generation probabilities, largely independent of the user's preference.

\paragraph{Empirical validation.} To validate this hypothesis, we compare TIGER~\cite{rajput2023tiger}, a representative GR model, with SASRec~\cite{kang2018sasrec}, a standard item ID-based model, across three public datasets (detailed settings are provided in~\Cref{subsec:exp-setup}). We uniformly sample 1,024 pairs of items across varying tree distances ($\{2,4,6,8\}$). For each pair $(i, i')$, we randomly sample 512 users $\{u_k\}_{k=1}^{512}$, compute the generation probabilities $\mathbb{P}(i \mid u_k)$ and $\mathbb{P}(i' \mid u_k)$ for each sampled user, and calculate the Pearson correlation coefficient between these two probability sequences. 

\paragraph{Correlation analysis.} The empirical results demonstrate that item generation probabilities are highly correlated when items are close in the decoding tree. Furthermore, the strength of this correlation increases monotonically as tree distance decreases. For instance, we observe a Pearson correlation of $\sim 1.0$ for items with a tree distance of $2$, which drops to $>0.8$ for a distance of $4$. Formally, we encode this observation into the following assumption for our theoretical analysis:
\begin{assumption}\label{assump:correlation}
    For any pair of items $i$ and $i'$, let $\rho(i, i')$ denote the Pearson correlation coefficient between their generation probabilities $\mathbb{P}(i \mid u)$ and $\mathbb{P}(i' \mid u)$ across the user population. We assume that for any threshold $\delta$, the probability $\mathbb{P}(\rho(i, i') > \delta)$ is monotonically decreasing \wrt the tree distance $d_{\mathcal{T}}(i, i')$.
\end{assumption}

\subsection{Limitation on User-Item Preference Modeling}\label{subsec:limit1-ui}

In this section, we analyze how the structural property observed in~\Cref{subsec:correlation} constrains the model's ability to express diverse user-item preference patterns.
We intuitively formulate the limitation as the inability to express \emph{rank reversals} between items that are close in the tree.

\paragraph{Rank reversals.} Let $u$ and $u'$ be two independent users drawn from the population $\mathcal{U}$. We define the rank reversal event $\mathcal{R}(i, i')$ for a pair of items $i$ and $i'$ as the scenario where users have opposite relative preferences:
\begin{align}\label{eq:rank_reversal}
    \mathcal{R}(i, i') \triangleq &\left\{ \mathbb{P}(i \mid u) > \mathbb{P}(i' \mid u) \land \mathbb{P}(i \mid u') < \mathbb{P}(i' \mid u') \right\} \nonumber\\
    &\cup \left\{ \mathbb{P}(i \mid u) < \mathbb{P}(i' \mid u) \land \mathbb{P}(i \mid u') > \mathbb{P}(i' \mid u') \right\}.
\end{align}
Rank reversals are essential for personalized recommendation; without them, the relative order of $i$ and $i'$ would be identical for all users, reducing to a non-personalized ranking.
Specifically, we have the following theorem bounding the probability of rank reversals based on the correlation of generation probabilities.
\begin{theorem}[Correlation Suppresses Rank Reversals]\label{thm:reversal_main}
    Let $\sigma^2$ denote the variance of the generation probabilities (assumed equal for $i$ and $i'$, see~\Cref{app:proofs_limit1}). The rank reversal probability is bounded by:
    \begin{equation}
        \mathbb{P}(\mathcal{R}(i, i')) \le \frac{4\sigma^2(1-\rho(i, i'))}{\mu^2 + 2\sigma^2(1-\rho(i, i'))},
    \end{equation}
    where $\mu = \left| \mathbb{E}[\mathbb{P}(i \mid u) - \mathbb{P}(i' \mid u)] \right|$ is the expected difference in generation probabilities and $\rho(i, i')$ is the correlation defined in~\Cref{assump:correlation}.
\end{theorem}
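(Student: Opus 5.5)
Define the per-user gap $D(u) = \mathbb{P}(i \mid u) - \mathbb{P}(i' \mid u)$. Its mean satisfies $|\mathbb{E}[D]| = \mu$. With equal variances $\sigma^2$ and correlation $\rho = \rho(i,i')$, its variance is
\[
\operatorname{Var}(D) = \sigma^2 + \sigma^2 - 2\rho\sigma^2 = 2\sigma^2(1-\rho).
\]
A rank reversal $\mathcal{R}(i,i')$ is exactly the event that $D(u)$ and $D(u')$ have strictly opposite signs, where $u,u'$ are independent draws from $\mathcal{U}$. The plan has three steps: bound the probability that a single $D$ lands on the "wrong" side of zero, then combine the two independent draws, then simplify.

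\textbf{Step 1 (one user).} Assume without loss of generality that $\mathbb{E}[D] = \mu \ge 0$. The case $\mu = 0$ is trivial, since the right-hand side then equals $2 \ge 1$. Let $q = \mathbb{P}(D(u) < 0)$. The event $D < 0$ is contained in $\{D - \mu \le -\mu\}$. Applying Cantelli's one-sided Chebyshev inequality gives
\[
q \le \frac{\operatorname{Var}(D)}{\operatorname{Var}(D) + \mu^2} = \frac{2\sigma^2(1-\rho)}{\mu^2 + 2\sigma^2(1-\rho)} =: p .
\]

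\textbf{Step 2 (two independent users).} Since $u$ and $u'$ are independent and identically distributed, the reversal event is a disjoint union: one user has $D > 0$ and the other has $D < 0$, in either order. Hence
\[
\mathbb{P}(\mathcal{R}) \le 2\,\mathbb{P}(D<0)\,\mathbb{P}(D>0) \le 2q(1-q) \le 2q .
\]
The first inequality is tight up to ties $D = 0$, which are excluded from $\mathcal{R}$ because its inequalities are strict. This yields $\mathbb{P}(\mathcal{R}) \le 2p$, which is exactly the stated bound
\[
\frac{4\sigma^2(1-\rho)}{\mu^2 + 2\sigma^2(1-\rho)} .
\]
A sharper bound $2\min(q,1-q)$ is also available, but it is not needed.

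\textbf{Main obstacle.} The delicate step is Step 1: choosing Cantelli rather than two-sided Chebyshev, which is what produces the denominator $\mu^2 + \operatorname{Var}(D)$ in the stated bound. It also requires handling the sign of $\mathbb{E}[D]$. If $\mathbb{E}[D] = -\mu$, I swap the roles of $i$ and $i'$; this is legitimate because $\mathcal{R}$ is symmetric in $i$ and $i'$.

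I also need to state clearly what the equal-variance assumption buys. Without it, $\operatorname{Var}(D) = \sigma_i^2 + \sigma_{i'}^2 - 2\rho\sigma_i\sigma_{i'}$, and the same argument goes through with that quantity in place of $2\sigma^2(1-\rho)$. The equal-variance assumption only serves to present the bound in the clean form stated in \Cref{thm:reversal_main}.

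Finally, I will note the interpretation. As $\rho \to 1$, the bound tends to $0$. Combined with \Cref{assump:correlation}, this means items at small tree distance, which have high $\rho$, are unlikely to exhibit rank reversals.
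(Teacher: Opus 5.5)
Your proposal is correct and follows essentially the same route as the paper. The paper also writes $\mathbb{P}(\mathcal{R}(i,i')) = 2\,\mathbb{P}(D>0)\,\mathbb{P}(D<0)$; this holds with equality by independence, so your first ``$\le$'' in Step 2 is actually ``$=$''. It then reduces to $\mu \ge 0$, bounds the lower tail $\mathbb{P}(D \le 0)$ with Cantelli's inequality, and substitutes $\mathrm{Var}(D) = 2\sigma^2(1-\rho)$, just as you do.
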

A detailed proof is provided in Appendix~\ref{app:proofs_limit1}.

\paragraph{Implication.} \Cref{thm:reversal_main} implies that as $\rho(i, i') \to 1$, the rank-reversal probability $\mathbb{P}(\mathcal{R}(i, i')) \to 0$. Combining with~\Cref{assump:correlation}, we conclude that for items with small tree distance $d_{\mathcal{T}}(i, i')$, GR is structurally constrained to assign similar relative ranking of $i$ and $i'$ across all users. This limits the model's ability to capture users whose preferences diverge from the majority trend.

\subsection{Limitation on Item-Item Similarity Modeling}\label{subsec:limit2-ii}

Beyond user-item preferences, a powerful recommender system should also capture complex item-item similarities. To analyze this, we adopt a collaborative filtering view~\cite{rendle2009bpr,sarwar2010itemcf}, where item-item similarity is induced by the inner product of the user-item preference matrices.
A critical property of such similarity in real-world scenarios is that it is \emph{not necessarily transitive}.
Consider a scenario with three items and two user groups $G_A$ and $G_B$:\vspace{0.15em}\\
\hspace*{2em}$\bullet$ Item $i_1$: Preferred by $G_A$.\vspace{0.15em}\\
\hspace*{2em}$\bullet$ Item $i_2$: Preferred by both $G_A$ and $G_B$.\vspace{0.15em}\\
\hspace*{2em}$\bullet$ Item $i_3$: Preferred by $G_B$.\vspace{0.15em}\\
In this case, a flexible model should capture that $i_1$ is similar to $i_2$ (co-preferred by $G_A$) and $i_2$ is similar to $i_3$ (co-preferred by $G_B$), while simultaneously reflecting that $i_1$ and $i_3$ are dissimilar (disjoint user base).
However, the strong correlation between the tree structure and item generation probabilities identified in~\Cref{subsec:correlation} hinders the model's ability to represent such intransitive similarity relationships. We formalize this limitation in the following theorem:

\begin{theorem}[Forced Transitivity]\label{thm:forced_transitivity}
    Based on Assumption~\ref{assump:correlation}, suppose high similarity (correlation $>\tau$) implies small tree distance $d_{\mathcal{T}} \le \delta$, and conversely $d_{\mathcal{T}} \le \delta$ implies correlation $>\tau$.
    Then, if the model captures similarities for both $(i_1, i_2)$ and $(i_2, i_3)$ (correlation $>\tau$), it is structurally forced to assign correlation $>\tau$ to $(i_1, i_3)$, preventing the representation of their dissimilarity.
\end{theorem}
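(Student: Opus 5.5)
The plan is to reduce the statement to an ultrametric property of the tree distance in Definition~\ref{def:tree_distance}, and then transfer it back to correlations through the two assumed implications. The approach is a chain of three implications: correlation to distance, distance to distance (the key structural step), and distance back to correlation.

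\emph{Step 1: pass from correlation to distance.} Assume $\rho(i_1,i_2)>\tau$ and $\rho(i_2,i_3)>\tau$. The hypothesis that high similarity implies small tree distance gives $d_{\mathcal{T}}(i_1,i_2)\le\delta$ and $d_{\mathcal{T}}(i_2,i_3)\le\delta$.

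\emph{Step 2: show that $d_{\mathcal{T}}$ is an ultrametric.} The claim is
\begin{equation*}
d_{\mathcal{T}}(i_1,i_3)\le\max\{d_{\mathcal{T}}(i_1,i_2),\,d_{\mathcal{T}}(i_2,i_3)\}.
\end{equation*}
Write $k_{ab}$ for the length of the longest common prefix of the semantic IDs of $i_a$ and $i_b$, so that $d_{\mathcal{T}}(i_a,i_b)=2(m-k_{ab})$.

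Let $k=\min\{k_{12},k_{23}\}$. Then $i_1$ and $i_2$ agree on their first $k$ tokens, and so do $i_2$ and $i_3$. Hence $c_1^{(j)}=c_2^{(j)}=c_3^{(j)}$ for all $j\le k$, which gives $k_{13}\ge\min\{k_{12},k_{23}\}$. Substituting into the distance formula yields the inequality.

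Combined with Step~1, this gives $d_{\mathcal{T}}(i_1,i_3)\le\delta$. This is strictly stronger than what the ordinary triangle inequality would give, namely only $2\delta$. The stronger bound is exactly what lets the argument stay at threshold $\delta$ rather than degrade.

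\emph{Step 3: pass from distance back to correlation.} Applying the converse hypothesis ($d_{\mathcal{T}}\le\delta\Rightarrow$ correlation $>\tau$) to the pair $(i_1,i_3)$ gives $\rho(i_1,i_3)>\tau$. This contradicts any attempt to represent $i_1$ and $i_3$ as dissimilar, which establishes the forced transitivity.

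Assumption~\ref{assump:correlation} serves only as motivation for the two-sided link between $\rho$ and $d_{\mathcal{T}}$. The proof itself uses just the stated implications.

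The main obstacle is Step~2, specifically stating the prefix argument cleanly so that the ultrametric inequality is airtight rather than merely a triangle inequality. The remaining steps are direct applications of the hypotheses.
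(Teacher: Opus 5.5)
Your proposal is correct and matches the paper's proof step for step. Correlation $>\tau$ gives $d_{\mathcal{T}}\le\delta$ for both pairs, the ultrametric inequality (Lemma~\ref{lem:ultrametric}) gives $d_{\mathcal{T}}(i_1,i_3)\le\delta$, and the converse hypothesis returns $\rho(i_1,i_3)>\tau$; your longest-common-prefix bound $k_{13}\ge\min\{k_{12},k_{23}\}$ is the paper's LCA-depth inequality stated directly in terms of semantic IDs, and is more explicit than the paper's appeal to a ``path intersection property.''
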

A detailed proof is provided in Appendix~\ref{app:proofs_limit2}.

\paragraph{Implication.} The proof leverages the fact that the tree distance $d_{\mathcal{T}}$ satisfies the ultrametric inequality: $d_{\mathcal{T}}(i_1, i_3) \le \max(d_{\mathcal{T}}(i_1, i_2), d_{\mathcal{T}}(i_2, i_3))$. This structural property implies that items sharing common similar neighbors in the decoding tree are forced to be close to each other. Consequently, generative recommendation models may struggle to distinguish items that are locally similar to the same set of items but distinct from each other (\eg sharing different features or preferred by different user groups), effectively limiting the model's expressiveness.

\section{Alleviating Expressive Limits}\label{sec:latte}

\begin{wrapfigure}{r}{0.5\linewidth}
\vspace{-1em}
\centering
\includegraphics[width=0.95\linewidth]{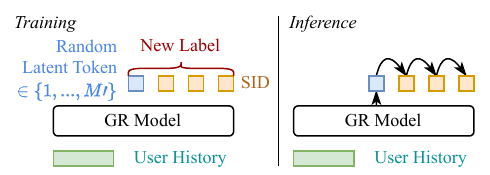}
\caption{Overall framework of Latte.}
\label{framework}
\vspace{-1em}
\end{wrapfigure}

Having analyzed the expressive limitations of standard autoregressive semantic ID generation in GR, we now present \textbf{Latte} (\Cref{framework}), a simple yet effective modification designed to relax the constraints imposed by the decoding tree structure and enhance model expressiveness. The core idea is to condition the model to predict an additional latent token prior to generating the semantic ID tokens. This effectively introduces a super-root node that connects multiple copies of the original decoding trees, allowing for multiple paths with varying tree distances between any pair of items (leaf nodes of the decoding tree). Below, we detail the training (\Cref{subsec:latte_training}) and inference (\Cref{subsec:latte_inference}) processes of Latte, and discuss how this straightforward adjustment improves the expressive power of GR models (\Cref{subsec:latte_discussion}).

\subsection{Training: Sampling Latent Tokens}\label{subsec:latte_training}

Latte uses a small set of additional discrete tokens, termed latent tokens, denoted as $\mathcal{L} = \{\ell_1, \ell_2, \ldots, \ell_{M'}\}$, where $M' \ll M$ represents the number of latent tokens. During training, for each target semantic ID $(c_t^{(1)}, c_t^{(2)}, \ldots, c_t^{(m)})$, we randomly sample a latent token $\ell \in \mathcal{L}$ and prepend it to the semantic ID, thereby creating an augmented target sequence $(\ell, c_t^{(1)}, c_t^{(2)}, \ldots, c_t^{(m)})$. The GR model is then trained to generate this augmented sequence autoregressively using the standard next-token prediction loss.

\subsection{Inference: Generating Latent Tokens}\label{subsec:latte_inference}

At inference time, we impose no constraints on the selection of latent tokens. Instead, we allow the model to generate autoregressively, initiating generation with a latent token followed by the semantic ID tokens. Consequently, the user-item preference score is reformulated as:
\begin{equation*}
\mathbb{P}(i_t \mid u) = \underset{\ell \in \mathcal{L}}{\operatorname{Agg}} \left( \mathbb{P}(\ell \mid u) \cdot \prod_{j=1}^{m} \mathbb{P}(c_t^{(j)} \mid \ell, c_t^{(1:j-1)}, u) \right),\label{eqn:p_item_latte}
\end{equation*}
where $\operatorname{Agg}$ denotes the aggregation operator, which can be instantiated via operations such as summation ($\operatorname{sum}$) or maximization ($\max$). In practice, following prior work~\cite{rajput2023tiger,zheng2024lcrec}, we employ beam search during inference to efficiently approximate this aggregation.

\subsection{Improved Expressive Power via Latent Tokens}\label{subsec:latte_discussion}

The analysis in~\Cref{sec:analysis} demonstrates that the fixed tree distance $d_{\mathcal{T}}$ enforces a high correlation between the generation probabilities of structurally close items. Latte effectively alleviates this limitation by introducing latent tokens $\mathcal{L}$, thereby expanding the single decoding tree into a forest of $|\mathcal{L}|$ trees. This structural modification enables dynamic distances between items.

\paragraph{Dynamic tree distance.}
In standard GR, tree distance $d_{\mathcal{T}}(i, i')$ remains constant. In contrast, Latte employs an aggregation operation (\eg $\operatorname{Agg}=\max$) that allows the generation process for a user $u$ to select a specific latent path. Let $\ell^*(i, u) = \argmax_{\ell} \mathbb{P}(\ell \mid u) \mathbb{P}(i \mid \ell, u)$ denote the dominant latent token for generating item $i$ given user $u$. The ``effective'' structural distance between $i$ and $i'$ thus becomes context-dependent:
\begin{equation}
    d_{\text{eff}}(i, i'; u) = \begin{cases}
    d_{\mathcal{T}}(i, i') & \text{if } \ell^*(i, u) = \ell^*(i', u), \\
    2(m+1) & \text{if } \ell^*(i, u) \neq \ell^*(i', u).
\end{cases}
\end{equation}
By assigning distinct latent tokens to items, the model can significantly increase the effective distance. This implies that the probability computations for structurally close items diverge at an earlier stage (see~\Cref{eqn:p_item_i,eqn:p_item_i2}). This capability enables the model to reflect lower correlations even for items with similar SIDs (empirically validated in \Cref{subsec:exp-structure-corr}). Proof can be found in~\Cref{app:express_latte}.

\section{Experiments}\label{sec:experiments}

\subsection{Experimental Setup}\label{subsec:exp-setup}

\paragraph{Datasets.} Following previous works~\cite{liu2025e2egrec,zheng2025mtgrec}, we conduct experiments on three categories of the Amazon Reviews 2023 dataset~\cite{hou2024bridging}: \textbf{Instruments}, \textbf{Scientific}, and \textbf{Games}. Each user's review history is grouped into a single sequence and sorted chronologically. We adopt the widely used leave-one-out strategy~\cite{kang2018sasrec,rajput2023tiger} for data splitting, using the most recent interaction for testing, the second most recent for validation, and the remainder for training. Data statistics are summarized in~\Cref{tab:data-stats}.

\paragraph{Evaluation details.} We adopt PSID~\cite{zhang2025psid} with RQ-Kmeans~\cite{ju2025generative} item tokenization as our base model. Please refer to~\Cref{app:exp-impl} for detailed implementation and evaluation settings.

\begin{table*}[!t]
\small
\centering
\caption{Performance comparisons between different methods and the proposed method Latte. The best and second-best results are highlighted in \textbf{bold} and \underline{underlined} font, respectively. ``$\Delta$'' indicates the performance gain of Latte over the best baseline. ``$^*$'' denotes statistically significant improvements ($p<0.05$) over the best baseline according to a paired t-test.}
\label{tab:main_res}
\huge
\resizebox{\linewidth}{!}{
\setlength{\tabcolsep}{10pt}
\renewcommand\arraystretch{1.1}
{\setlength{\tabcolsep}{5pt}
\begin{tabular}{lcccccccccccc}
\toprule 
\multicolumn{1}{l}{\multirow{2.5}{*}{Methods}} & \multicolumn{4}{c}{Instrument}          & \multicolumn{4}{c}{Scientific}          & \multicolumn{4}{c}{Game}                \\
\cmidrule(l){2-5} \cmidrule(l){6-9} \cmidrule(l){10-13} 
\multicolumn{1}{l}{}                         & R@5 & R@10 & N@5 & N@10 & R@5 & R@10 & N@5 & N@10 & R@5 & R@10 & N@5 & N@10 \\
\midrule
GRU4Rec                                      & 0.0324   & 0.0501     & 0.0209 & 0.0266  & 0.0202    & 0.0338    & 0.0129 & 0.0173  & 0.0499  & 0.0799     & 0.0320  & 0.0416  \\
BERT4Rec                                     & 0.0307   & 0.0485    & 0.0195 & 0.0252  & 0.0186   & 0.0296    & 0.0119 & 0.0155  & 0.0460    & 0.0735    & 0.0298 & 0.0386  \\
SASRec                                       & 0.0333   & 0.0523    & 0.0213 & 0.0274  & 0.0259   & 0.0412    & 0.0150  & 0.0199  & 0.0535   & 0.0847    & 0.0331 & 0.0438  \\
FMLP-Rec                                     & 0.0339   & 0.0536    & 0.0218 & 0.0282  & 0.0269   & 0.0422    & 0.0155 & 0.0204  & 0.0528   & 0.0857    & 0.0338 & 0.0444  \\
HSTU                                         & 0.0343   & 0.0577    & 0.0191 & 0.0271  & 0.0271   & 0.0429    & 0.0147 & 0.0198  & 0.0578   & 0.0903    & 0.0334 & 0.0442  \\
FDSA                                         & 0.0347   & 0.0545    & 0.0230  & 0.0293  & 0.0262   & 0.0421    & 0.0169 & 0.0213  & 0.0544   & 0.0852    & 0.0361 & 0.0448   \\
S$^3$-Rec                                       & 0.0317   & 0.0496    & 0.0199 & 0.0257  & 0.0263   & 0.0418    & 0.0171 & 0.0219  & 0.0485   & 0.0769    & 0.0315 & 0.0406  \\
\midrule
TIGER                                        & 0.0370    & 0.0564    & 0.0244 & 0.0306  & 0.0264   & 0.0422    & 0.0175 & 0.0226  & 0.0559   & 0.0868    & 0.0366 & 0.0467  \\
LETTER                                       & 0.0372   & 0.0580     & 0.0246 & 0.0313  & 0.0279   & 0.0435    & \underline{0.0182} & 0.0232  & 0.0563   & 0.0877    & 0.0372 & 0.0473  \\
ActionPiece                                  & 0.0383   & \underline{0.0615}    & 0.0243 & 0.0318  & \underline{0.0284}   & \underline{0.0452}    & \underline{0.0182} & \underline{0.0236}  & 0.0591   & 0.0927    & 0.0382 & 0.0490  \\
PSID                                         & \underline{0.0390}   & 0.0602    & \underline{0.0256} & \underline{0.0325}  & 0.0278   & 0.0445    & 0.0181 & 0.0235  & \underline{0.0599}   & \underline{0.0939}    & \underline{0.0391} & \underline{0.0500}  \\
\midrule
\textbf{Latte}                                        & \textbf{0.0401}$^*$ & \textbf{0.0618} & \textbf{0.0261}$^*$ & \textbf{0.0331}$^*$ & \textbf{0.0304}$^*$ & \textbf{0.0470}$^*$ & \textbf{0.0196}$^*$ & \textbf{0.0249}$^*$ & \textbf{0.0618}$^*$ & \textbf{0.0958}$^*$ & \textbf{0.0406}$^*$ & \textbf{0.0515}$^*$ \\
$\Delta$                                     & +2.82\%  & +0.49\%   & +1.95\% & +1.85\% & +7.04\%  & +3.98\%   & +7.69\% & +5.51\% & +3.17\%  & +2.02\%   & +3.84\% & +3.00\%  \\
\bottomrule
\end{tabular}
}}
\end{table*}

\subsection{Main Results}

We compare the proposed Latte method with all baselines in~\Cref{tab:main_res}. From the results, we can see that Latte consistently outperforms its base model, PSID, and all other baselines across all datasets and metrics. Specifically, Latte achieves an average of $3.45\%$ relative improvement on NDCG@$10$ with only one simple modification, \ie generating an additional latent token before semantic IDs, demonstrating the effectiveness of relaxing the constraints discussed in~\Cref{sec:analysis}.

\subsection{Tree-Structure Correlation Analysis}\label{subsec:exp-structure-corr}

To quantitatively compare structural constraints across different models, we analyze the association between the decoding tree structure and the item probabilities. Specifically, we adopt the definition of item-item similarity introduced in~\Cref{subsec:correlation}, defined as the Pearson correlation of item generation probabilities across the sampled user population. Since tree distances typically take only a few values (\eg $\{2, 4, 6, 8\}$), we employ Kendall's rank correlation coefficient~\cite{kendall1938new} for this measurement. This metric is more robust for evaluating monotonic associations between ranked variables. A stronger association (approaching $-1$ or $1$) implies that item similarities are heavily dominated by the decoding tree structure. Conversely, a value closer to $0$ is desirable, as it indicates the model's capacity to learn flexible item relationships that are not strictly bound by the underlying tree structure.

\begin{wraptable}{r}{0.48\linewidth}
\vspace{-1em}
\small
\centering
\caption{Kendall's rank correlation between tree distance and item-item similarity. \textbf{Bold} numbers indicate the fewest constraints.}
\label{tab:kendall_corr}
\resizebox{\linewidth}{!}{
\begin{tabular}{lccc}
\toprule
Model & Instruments ($\uparrow$) & Scientific ($\uparrow$) & Games ($\uparrow$) \\
\midrule
TIGER & -0.7170 & -0.6137 & -0.6462 \\
PSID & -0.6225 & -0.4611 & -0.6072  \\
Latte & \textbf{-0.6030} & \textbf{-0.4451} & \textbf{-0.5958} \\
\bottomrule
\end{tabular}
}
\vspace{-0.7em}
\end{wraptable}

As shown in~\Cref{tab:kendall_corr}, Latte consistently achieves lower absolute Kendall correlation values compared to its base model, PSID, while using the exact same semantic IDs. This suggests that generating an additional latent token effectively relaxes the structural constraints imposed by the decoding tree. Notably, there is a large numerical gap between TIGER and PSID, which may be attributed to TIGER's greater tree depth ($4$ \emph{vs.} $3$). However, because these two models use different sets of semantic IDs, they are not directly comparable. We include the TIGER results as a reference only; how to fairly compare structural constraints across models with different semantic IDs remains an open research question.

\begin{figure}[t]
\centering

\begin{minipage}[t]{0.49\linewidth}
\vspace{0pt}
\small
\centering
\captionof{table}{Performance comparison across different tokenization methods (NDCG@10). $\Delta$ denotes Latte's relative improvement over the base model PSID. All improvements are statistically significant ($p<0.05$) according to the paired t-test.}
\label{tab:tokenization}
\resizebox{\linewidth}{!}{
{\setlength{\tabcolsep}{4pt}
\begin{tabular}{llcc@{\hspace{1.3em}}c}
\toprule
Tokenization & Model & Instruments & Scientific & Games \\
\midrule
\multirow{3}{*}{OPQ} & PSID & 0.0313 & 0.0232 & 0.0478 \\
                     & Latte & \textbf{0.0318} & \textbf{0.0235} & \textbf{0.0493} \\
                     & $\Delta$ & +1.68\% & +1.37\% & +3.24\% \\
\midrule
\multirow{3}{*}{RQ-VAE} & PSID & 0.0325 & 0.0241 & 0.0490 \\
                        & Latte & \textbf{0.0331} & \textbf{0.0247} & \textbf{0.0502} \\
                        & $\Delta$ & +2.02\% & +2.76\% & +2.55\% \\
\midrule
\multirow{3}{*}{RQ-KMeans} & PSID & 0.0325 & 0.0235 & 0.0500 \\
                           & Latte & \textbf{0.0331} & \textbf{0.0249} & \textbf{0.0515} \\
                           & $\Delta$ & +2.02\% & +5.90\% & +3.11\% \\
\bottomrule
\end{tabular}
}}
\end{minipage}
\hfill
\begin{minipage}[t]{0.49\linewidth}
\vspace{0pt}
\small
\centering
\captionof{table}{Performance comparison of different aggregation methods (NDCG@10). 
}
\label{tab:aggregation}
\resizebox{\linewidth}{!}{
\begin{tabular}{cccc}
\toprule
Aggregation Method & Instruments & Scientific & Games \\
\midrule
PSID & 0.032462 & 0.023531 & 0.049994 \\
Latte ($\operatorname{Agg}=\operatorname{sum}$) & \textbf{0.033134} & \underline{0.024301} & \underline{0.051476} \\
Latte ($\operatorname{Agg}=\max$) & \underline{0.033114} & \textbf{0.024920} & \textbf{0.051547} \\
\bottomrule
\end{tabular}
}

\vspace{0.8em}

\includegraphics[width=\linewidth]{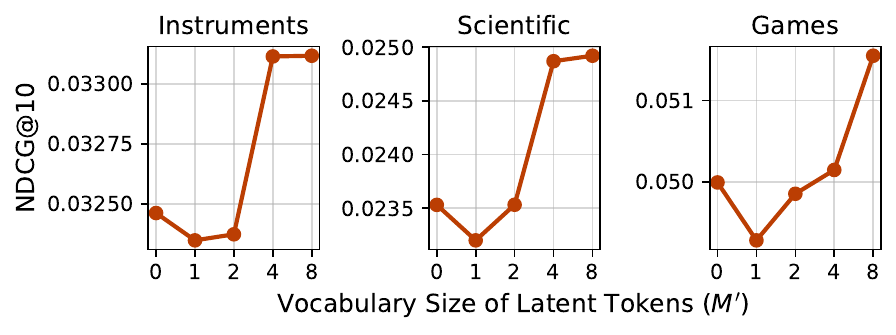}
\captionof{figure}{Analysis of model performance \wrt the number of latent tokens $M'$.}
\label{fig:vocabulary_size}

\end{minipage}

\end{figure}

\subsection{In-Depth Analysis}

\subsubsection{Performance \wrt Item Tokenization Method}

To investigate the generalizability of Latte across different item tokenization strategies, we conduct experiments using three representative tokenizers: OPQ~\cite{hou2023vqrec,chen2025onesearch,hou2025rpg}, RQ-VAE~\cite{rajput2023tiger,wang2024letter,zhu2024cost,zheng2025mtgrec}, and RQ-KMeans~\cite{ju2025generative,deng2025onerec}. As shown in Table~\ref{tab:tokenization}, Latte consistently outperforms the base model PSID across all tokenization methods and datasets. This result demonstrates the broad applicability of our proposed method. Interestingly, while the base model performs best when paired with RQ-VAE, the introduction of latent tokens allows RQ-KMeans to achieve superior performance. We hypothesize that tokenizers not heavily optimized for the target corpus, such as RQ-KMeans, which lacks the end-to-end training of RQ-VAE, may offer greater optimization headroom.

\subsubsection{Performance \wrt Inference Aggregation Method}

In~\Cref{subsec:latte_inference}, we introduced two aggregation methods for combining the scores of generations associated with different latent tokens that point to the same target item: $\operatorname{sum}$ and $\max$. We compare their performance in~\Cref{tab:aggregation}. We observe that both aggregation methods outperform the base model, while the performance difference between the two is relatively small. These results demonstrate that both strategies are effective and that the choice of aggregation method is robust.

\subsubsection{Performance \wrt Latent Token Vocabulary Size}

We investigate the impact of latent token vocabulary size on model performance. Specifically, we tune the number of introduced latent tokens $M' \in \{0, 1, 2, 4, 8\}$, where $0$ denotes the base model PSID. As shown in~\Cref{fig:vocabulary_size}, we observe that introducing only one or two latent tokens usually leads to a performance drop due to accumulated errors in latent token prediction. However, when the vocabulary size increases to $4$ and $8$, the models outperform the base model. Overall, these results suggest that a moderate number of latent tokens can effectively improve model performance.

\section{Incorporating Inductive Bias into Latent Tokens}

In~\Cref{sec:latte}, we proposed generating latent tokens prior to semantic IDs to enhance model expressiveness. While effective, these latent tokens are sampled uniformly at random during training. In this section, we explore the possibility of anchoring latent tokens to specific inductive biases.

\subsection{Latent Tokens as SID Permutation Indicators}

We consider a multimodal setting where each token corresponds to a specific modality of item features. Existing works typically adopt a fixed modality order to organize an item's semantic IDs~\cite{doh2025talkplay,zhu2025beyond,zhai2025multimodal,zhang2025multi}, largely based on human heuristics. However, user preferences over different modalities may vary. In addition, identifying a globally optimal modality ordering is non-trivial. While a straightforward approach would involve enumerating all possible permutations and training separate models, such a strategy is computationally exhaustive. We are therefore interested in whether latent tokens can enable the model to adaptively select modality orders for different users, or alternatively, whether this mechanism can facilitate the discovery of globally best orderings in a purely data-driven manner.

Specifically, we bind each latent token to a specific permutation of the SID sequence (representing a unique ordering of modalities). During training, we continue to sample latent tokens uniformly; however, each token now dictates a specific permutation of the SIDs. We permute the SIDs accordingly before concatenating them with the latent token. During inference, the model generates latent tokens autonomously, followed by the SIDs. The only modification is that the generated SIDs are de-permuted back to their original order based on the preceding latent token.

\begin{figure}[t]
\centering
\begin{minipage}[t]{0.45\linewidth}
\centering
\captionof{table}{Performance comparison on MPD dataset with different modality orders.
}
\label{tab:mpd-results}
\small
\resizebox{\linewidth}{!}{
\begin{tabular}{ccc}
\toprule
Method & Recall@10 & NDCG@10 \\
\midrule
\multicolumn{3}{c}{\emph{Base: Fixed Modality Order}} \\
\midrule
playlist $\to$ tag $\to$ metadata & 0.1966 & 0.1266 \\
playlist $\to$ metadata $\to$ tag & 0.1972 & 0.1268 \\
tag $\to$ playlist $\to$ metadata & 0.1948 & 0.1255 \\
tag $\to$ metadata $\to$ playlist & 0.1939 & 0.1250 \\
metadata $\to$ playlist $\to$ tag & 0.1933 & 0.1247 \\
metadata $\to$ tag $\to$ playlist & 0.1942 & 0.1248 \\
\midrule
\multicolumn{3}{c}{\emph{Ours: Dynamic Modality Ordering}} \\
\midrule
$\operatorname{Agg}=\max$ & \underline{0.2060} & \underline{0.1335} \\ 
$\operatorname{Agg}=\operatorname{sum}$ & \textbf{0.2107} & \textbf{0.1353} \\
\bottomrule
\end{tabular}
}
\end{minipage}
\hfill
\begin{minipage}[t]{0.52\linewidth}
\centering
\vspace{0pt}
\includegraphics[width=\linewidth]{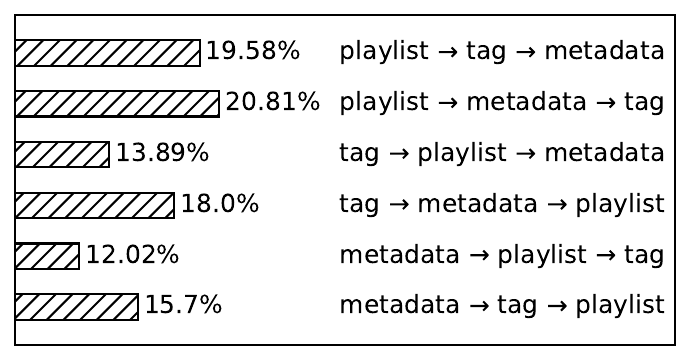}
\captionof{figure}{Distribution of modality orders corresponding to the generated permutation tokens.}
\label{fig:ordering_dist}
\end{minipage}
\end{figure}

\subsection{Experiments and Analysis}

\paragraph{Dataset.} We evaluate our approach on the Million Playlist Dataset (MPD)~\cite{mpd2018}. Following prior work~\cite{kim2026fusid}, we filter out songs lacking tags or metadata and exclude playlists with fewer than six songs. The dataset is split into training, validation, and test sets using an 8:1:1 ratio.

\paragraph{Data processing.} We incorporate three modalities: \textit{playlist}, \textit{tag}, and \textit{metadata}. For the playlist modality, we leverage collaborative information from song co-occurrences by training a Word2Vec-style \cite{word2vec2013,doh2025talkplay} embedding model on the training set. For the remaining modalities, \textit{tags} represent categorical features (\eg genre), while \textit{metadata} comprises text-rich descriptions. For simplicity, we follow Doh et al. (2025)~\citep{doh2025tools} and extract embeddings using a pretrained text encoder~\cite{yang2025qwen3}. Each embedding is clustered into 1,024 groups via $k$-means clustering, where the centroid indices serve as the semantic ID tokens. Since there are three modalities, we maintain a latent token vocabulary of size $3! = 6$.

\paragraph{Results.} We compare our latent token-based method against six baseline models that utilize fixed modality orders. As shown in~\Cref{tab:mpd-results}, our method consistently outperforms all baselines, demonstrating the effectiveness of introducing modality-order flexibility via latent tokens. In~\Cref{fig:ordering_dist}, we observe that the model generates a non-uniform distribution of latent tokens. Notably, the two most frequent permutations (both of which prioritize the playlist modality) correspond to the best-performing fixed-order baselines in~\Cref{tab:mpd-results}. This suggests that latent tokens with inductive bias can automatically identify more effective modality sequences. Overall, these results validate the potential of anchoring latent tokens to specific inductive biases. While the permutation-based approach discussed here is a straightforward example, we believe many other grounding strategies merit future exploration.

\section{Related Work}

\paragraph{Generative recommendation.}

Unlike conventional models that represent each item via learnable embedding vectors~\cite{kang2018sasrec,hidasi2016gru4rec} or feature-based representations~\cite{hou2022unisrec,li2023recformer}, generative recommendation tokenizes each item into a sequence of discrete tokens and frames the recommendation task as a sequence generation problem~\cite{rajput2023tiger,zhai2024hstu,zheng2024lcrec,deng2025onerec,ju2026semantic}. Existing literature has largely focused on designing item tokenization algorithms~\cite{wang2024letter,zhu2024cost,wang2024colarec,jin2024lmindexer,hua2023p5cid,liu2025e2egrec,wang2026pit,xie2026agentictagger} or more effectively leveraging heterogeneous features during the tokenization stage~\cite{wang2024eager,liu2024mbgen,hou2025actionpiece,ye2025dual,wang2025empowering,wei2025oneloc,xu2025mmq,luo2025qarm}. To better understand the underlying mechanisms of GR, recent studies have investigated its cold-start capabilities~\cite{yang2024liger,ding2026specgr}, scaling behavior~\cite{liu2025understanding}, and generalization capability~\cite{ding2026doesgenerativerecommendationgeneralize}. In this work, we contribute to this line by identifying inherent expressiveness limitations stemming from the unique decoding process of GR. Specifically, we demonstrate cases where generative models fail to capture even simple user-item preference and item-item similarity patterns.

\paragraph{Expressiveness of recommendation models.}

As recommender systems enter the era of deep learning, understanding the expressive power of various architectures has emerged as a critical research direction. Existing literature can be broadly categorized into two primary threads.
The first focuses on the expressiveness of the underlying backbone architectures upon which recommendation models are built, such as graph neural networks~\cite{xu2019powerfulgraphneuralnetworks,morris2021weisfeilerlemanneuralhigherorder,shen2021powerful,cai2023expressive}. The second thread examines the dot-product-based scoring functions, like connecting expressiveness with embedding dimensionality~\cite{Ohsaka_2023,weller2025theoreticallimitationsembeddingbasedretrieval,pmlr-v162-menon22a,reimers2021curse}.
However, unlike traditional two-tower models, generative recommendation produces predictions by decoding sequences rather than retrieving single vectors. Consequently, the insights derived from retrieval-based paradigms may not hold in this new context. In this work, we study the connections between the sequence decoding process and item probability distributions to analyze the expressive limitations of modern generative recommendation models.

\section{Conclusion}

In this work, we identify inherent expressiveness limitations in generative recommendation models that arise from their unique autoregressive decoding process. We empirically demonstrate that GR models tend to assign similar probabilities to items that are close in the decoding tree induced by semantic ID tokens. We further provide theoretical evidence that such structural correlations prevent GR models from capturing simple user-item preference (rank reversals, see~\Cref{subsec:limit1-ui}) and item-item similarity (forced transitivity, see~\Cref{subsec:limit2-ii}) patterns. To mitigate this issue, we propose a simple yet effective modification to the standard GR framework. By generating latent tokens before the semantic IDs, the decoding tree is reshaped to allow multiple paths with varying tree distances between any pair of items, thereby relaxing the structural constraints imposed by the original fixed decoding tree. Extensive experiments on public datasets validate the effectiveness of our proposed method, leading to an average of $3.45\%$ relative improvement of NDCG@10. We further investigate binding latent tokens with inductive biases. In particular, order-indicating latent tokens help the model identify effective modality orderings, resulting in improved overall performance.
In future work, we are interested in developing more effective ways to bind latent tokens with inductive biases.

\clearpage

\bibliographystyle{plain}
\bibliography{ref}

\newpage
\appendix

\section{Notation}

\begin{table*}[!t]
    \small
    \caption{Notations and explanations.}
    \centering
    \label{tab:notation}
    \vskip 0.1in
    \begin{tabular}{cl}
        \toprule
        \textbf{Notation} & \textbf{Explanation}\\
        \midrule
        $i$, $i_t$ & item, item at time $t$ \\
        $(i_1, i_2, \ldots, i_{t-1})$ & historical interaction sequence \\
        $u$ & user's historical interactions represented as semantic IDs \\
        $c^{(j)}$ & the $j$-th token in a semantic ID \\
        $(c^{(1)}, c^{(2)}, \ldots, c^{(m)})$ & semantic ID, a sequence of discrete tokens indexing an item \\
        $c_t^{(1:j-1)}$ & Shorthand for the token sequence $(c_t^{(1)}, c_t^{(2)}, \ldots, c_t^{(j-1)})$ \\
        $m$ & the length (depth) of semantic IDs \\
        $\mathcal{C}^{(j)}$ & vocabulary for the $j$-th position in semantic IDs \\
        $M$ & size of the token vocabulary $\mathcal{C}^{(j)}$ \\
        $\mathbb{P}(i_t \mid u)$ & probability of item $i_t$ given user history $u$ \\
        $\mathbb{P}(c_t^{(j)} \mid c_t^{(1)}, \ldots, c_t^{(j-1)}, u)$ & conditional probability of generating the $j$-th token \\
        $\mathcal{T}$ & decoding tree induced by valid semantic IDs \\
        $d_{\mathcal{T}}(i, i')$ & tree distance between item $i$ and $i'$ in the decoding tree $\mathcal{T}$ \\
        $\rho(i, i')$ & Pearson correlation coefficient between generation probabilities of $i$ and $i'$ \\
        $\sigma^2$ & Variance of item generation probability $\mathbb{P}(i \mid u)$ across users \\
        $\mathcal{R}(i, i')$ & Rank reversal event between items $i$ and $i'$ (see~\Cref{subsec:limit1-ui}) \\
        $\mu$ & Expected difference in generation probabilities $|\mathbb{E}[\mathbb{P}(i \mid u) - \mathbb{P}(i' \mid u)]|$ \\
        $M'$ & Number of latent tokens in Latte \\
        \bottomrule
    \end{tabular}
\end{table*}

\begin{table}[t]
\small
\centering
\caption{Statistics of the datasets. ``Avg. $t$'' denotes the average length of user interaction sequences.}
\label{tab:data-stats}
\begin{tabular}{lrrrr}
\toprule
Dataset & \#Users & \#Items & \#Interactions & Avg. $t$ \\
\midrule
\textbf{Instruments} & 57,439 & 24,587 & 511,836 & 8.91 \\
\textbf{Scientific} & 50,985 & 25,848 & 412,947 & 8.10 \\
\textbf{Games} & 94,762 & 25,612 & 814,586 & 8.60 \\
\bottomrule
\end{tabular}
\end{table}

We summarize the notations used throughout the paper in~\Cref{tab:notation} for easy reference.

\section{Proofs for Section~\ref{subsec:limit1-ui} (User-Item Preference Limitation)}\label{app:proofs_limit1}

In this section, we provide the detailed proofs for the claims made in Section~\ref{subsec:limit1-ui} regarding the limitation of autoregressive SID generation in modeling diverse user preferences.

\subsection{Formal Setup}
Let $u$ be a random user drawn from the user population $\mathcal{U}$. For a fixed item pair $(i,i')$, we consider the generation probabilities as random variables:
\[
X \triangleq \mathbb{P}(i\mid u),\qquad Y \triangleq \mathbb{P}(i'\mid u),\qquad D \triangleq X-Y.
\]
Let $\mu \triangleq \mathbb{E}[D]$ be the expected difference in generation probability (\eg reflecting global trend difference) and $\rho(i,i') \triangleq \mathrm{Corr}(X,Y)$ be the Pearson correlation coefficient between the generation probabilities of $i$ and $i'$ (as illustrated in~\Cref{fig:tree_distance_correlation}).

For two independent users $u, u' \overset{\mathrm{iid}}{\sim}\mathcal{U}$, we define the rank reversal event $\mathcal{R}(i,i')$ as the event where the two users have opposite relative preferences for items $i$ and $i'$. Formally:
\[
\mathcal{R}(i,i') \triangleq
\Big\{ \mathbb{P}(i\mid u)>\mathbb{P}(i'\mid u),\ \mathbb{P}(i\mid u')<\mathbb{P}(i'\mid u') \Big\}
\ \cup\
\Big\{ \mathbb{P}(i\mid u)<\mathbb{P}(i'\mid u),\ \mathbb{P}(i\mid u')>\mathbb{P}(i'\mid u') \Big\}.
\]
In terms of the difference variable $D$, let $D_u = \mathbb{P}(i\mid u) - \mathbb{P}(i'\mid u)$ and $D_{u'} = \mathbb{P}(i\mid u') - \mathbb{P}(i'\mid u')$. Then $\mathcal{R}(i,i')=\{D_u>0,\,D_{u'}<0\}\cup\{D_u<0,\,D_{u'}>0\}$.

\begin{assumption}[Comparable scale across users]\label{assump:scale}
For the fixed pair $(i,i')$, we assume the variance of generation probabilities is comparable: $\mathrm{Var}(X)=\mathrm{Var}(Y)=\sigma^2$ for some $\sigma^2>0$.
\end{assumption}

\subsection{Proof of Rank Reversal Bound}

\begin{lemma}[Reversal probability decomposition]\label{lem:reversal_decompose}
    Under the independence of $u$ and $u'$,
    \[
    \mathbb{P}\big(\mathcal{R}(i,i')\big) = 2\,\mathbb{P}(D>0)\,\mathbb{P}(D<0).
    \]
\end{lemma}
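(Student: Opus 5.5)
The plan is to write $\mathcal{R}(i,i')$ as a disjoint union of two events, apply independence to each, and then note that the two resulting terms are equal. Let $D_u$ and $D_{u'}$ be the difference variables defined in the formal setup. Because $u$ and $u'$ are i.i.d.\ draws from $\mathcal{U}$ and $D$ is a fixed measurable function of the user, $D_u$ and $D_{u'}$ are independent and each has the law of $D$.

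\emph{Step 1 (disjointness).} The events $\{D_u>0,\,D_{u'}<0\}$ and $\{D_u<0,\,D_{u'}>0\}$ cannot occur together, since the first requires $D_u>0$ and the second requires $D_u<0$. So the probability of their union is the sum
\[
\mathbb{P}\big(\mathcal{R}(i,i')\big)=\mathbb{P}(D_u>0,\,D_{u'}<0)+\mathbb{P}(D_u<0,\,D_{u'}>0).
\]

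\emph{Step 2 (independence and identical distribution).} Independence factors each joint probability into a product of marginals, for example $\mathbb{P}(D_u>0,\,D_{u'}<0)=\mathbb{P}(D_u>0)\,\mathbb{P}(D_{u'}<0)$. Identical distribution then replaces each marginal by $\mathbb{P}(D>0)$ or $\mathbb{P}(D<0)$. Both terms become $\mathbb{P}(D>0)\,\mathbb{P}(D<0)$, so their sum is $2\,\mathbb{P}(D>0)\,\mathbb{P}(D<0)$.

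\emph{Main obstacle.} The argument is mostly bookkeeping. The one point to state explicitly is that ties ($D_u=0$ or $D_{u'}=0$) are excluded from $\mathcal{R}$ by its strict inequalities, so they need no special handling. The other is the justification for treating $D_u$ and $D_{u'}$ as independent copies of $D$: this is exactly the "independence of $u$ and $u'$" hypothesis in the lemma, with the model's scoring function held fixed.
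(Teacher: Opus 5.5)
Your proposal is correct and matches the paper's proof: split $\mathcal{R}(i,i')$ into the two sign-pattern events, factor each joint probability using the independence of $u$ and $u'$, and use that $D_u$ and $D_{u'}$ are identically distributed as $D$ to reduce both terms to $\mathbb{P}(D>0)\,\mathbb{P}(D<0)$. Your explicit check that the two events are disjoint, so the probability of the union is the sum, is a step the paper uses without stating, and it is worth keeping.
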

\begin{proof}
    By definition and independence:
    \begin{align*}
    \mathbb{P}\big(\mathcal{R}(i,i')\big)
    &=\mathbb{P}(D_u>0,\,D_{u'}<0)+\mathbb{P}(D_u<0,\,D_{u'}>0)\\
    &=\mathbb{P}(D_u>0)\mathbb{P}(D_{u'}<0)+\mathbb{P}(D_u<0)\mathbb{P}(D_{u'}>0).
    \end{align*}
    Since $u$ and $u'$ are i.i.d., $D_u$ and $D_{u'}$ are identically distributed as $D$. Thus,
    \[
    \mathbb{P}\big(\mathcal{R}(i,i')\big) = 2\,\mathbb{P}(D>0)\mathbb{P}(D<0).
    \]
\end{proof}

\begin{lemma}[Gap variance]\label{lem:var_gap}
    Under Assumption~\ref{assump:scale}, the variance of the gap $D$ is:
    \[
    \mathrm{Var}(D)=2\sigma^2\big(1-\rho(i,i')\big).
    \]
\end{lemma}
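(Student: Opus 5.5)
The plan is to expand the variance of a difference directly. With $D = X - Y$, the standard identity gives
\[
\mathrm{Var}(D) = \mathrm{Var}(X) + \mathrm{Var}(Y) - 2\,\mathrm{Cov}(X,Y).
\]
This needs only that $X$ and $Y$ have finite second moments. That holds automatically here: both are generation probabilities, so they lie in $[0,1]$ and are bounded random variables over the user population $\mathcal{U}$.

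Next, I will rewrite the covariance through the Pearson correlation. By definition of $\rho(i,i') = \mathrm{Corr}(X,Y)$ in the formal setup,
\[
\mathrm{Cov}(X,Y) = \rho(i,i')\sqrt{\mathrm{Var}(X)\,\mathrm{Var}(Y)}.
\]
Under Assumption~\ref{assump:scale}, $\mathrm{Var}(X)=\mathrm{Var}(Y)=\sigma^2$, so the covariance equals $\rho(i,i')\,\sigma^2$. The assumption $\sigma^2>0$ is what makes $\rho$ well defined.

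Substituting back gives
\[
\mathrm{Var}(D) = \sigma^2 + \sigma^2 - 2\rho(i,i')\sigma^2 = 2\sigma^2\bigl(1-\rho(i,i')\bigr),
\]
which is the claim.

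There is no real obstacle; the only care needed is to state explicitly where Assumption~\ref{assump:scale} enters. Without it, one would only get the general form $\sigma_X^2+\sigma_Y^2-2\rho\,\sigma_X\sigma_Y$.

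I would also note that the formula gives $\mathrm{Var}(D)\to 0$ as $\rho(i,i')\to 1$. This is the quantity that will feed, together with Lemma~\ref{lem:reversal_decompose}, into a one-sided Chebyshev (Cantelli) bound on $\min\{\mathbb{P}(D>0),\mathbb{P}(D<0)\}$, yielding Theorem~\ref{thm:reversal_main}.
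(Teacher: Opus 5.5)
Your proposal is correct and follows the paper's proof: expand $\mathrm{Var}(X-Y)=\mathrm{Var}(X)+\mathrm{Var}(Y)-2\,\mathrm{Cov}(X,Y)$, use $\mathrm{Cov}(X,Y)=\rho(i,i')\sigma^2$ under Assumption~\ref{assump:scale}, and simplify to $2\sigma^2(1-\rho(i,i'))$. Your remarks on finite second moments and on $\sigma^2>0$ making $\rho$ well defined are sound, though the paper leaves them implicit.
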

\begin{proof}
    Using $\mathrm{Var}(X)=\mathrm{Var}(Y)=\sigma^2$ and $\mathrm{Cov}(X,Y)=\rho(i,i')\,\sigma^2$, we have:
    \[
    \mathrm{Var}(D)=\mathrm{Var}(X-Y)=\mathrm{Var}(X)+\mathrm{Var}(Y)-2\,\mathrm{Cov}(X,Y)=2\sigma^2-2\rho(i,i')\sigma^2=2\sigma^2(1-\rho(i,i')).
    \]
\end{proof}

\begin{theorem}[High correlation implies few cross-user rank reversals]\label{thm:reversal_bound_app}
    Under Assumption~\ref{assump:scale}, let $\mu=\mathbb{E}[D]$ and $\rho=\rho(i,i')$. Then:
    \[
    \mathbb{P}\big(\mathcal{R}(i,i')\big) \le \frac{4\sigma^2(1-\rho)}{\mu^2+2\sigma^2(1-\rho)}.
    \]
\end{theorem}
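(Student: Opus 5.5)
The plan is to combine Lemma~\ref{lem:reversal_decompose} and Lemma~\ref{lem:var_gap} with a one-sided Chebyshev (Cantelli) inequality applied to the gap $D = X - Y$. Write $v \triangleq \mathrm{Var}(D) = 2\sigma^2(1-\rho)$ by Lemma~\ref{lem:var_gap}; the target bound then reads $\mathbb{P}(\mathcal{R}(i,i')) \le 2v/(\mu^2+v)$. By Lemma~\ref{lem:reversal_decompose}, $\mathbb{P}(\mathcal{R}(i,i')) = 2\,\mathbb{P}(D>0)\,\mathbb{P}(D<0)$. Since each factor is at most one, it suffices to bound the probability of whichever sign of $D$ is \emph{against} the mean trend, and then discard the other factor.

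\textbf{Case $\mu>0$.} The event $\{D<0\}$ is contained in $\{D-\mu \le -\mu\}$. Cantelli's inequality, $\mathbb{P}(Z-\mathbb{E}Z \le -t) \le \mathrm{Var}(Z)/(\mathrm{Var}(Z)+t^2)$ for $t>0$, applied with $Z = D$ and $t=\mu$, gives
\begin{equation*}
\mathbb{P}(D<0) \le \frac{v}{v+\mu^2}.
\end{equation*}
Combining this with $\mathbb{P}(D>0)\le 1$ yields
\begin{equation*}
\mathbb{P}(\mathcal{R}) \le \frac{2v}{\mu^2+v} = \frac{4\sigma^2(1-\rho)}{\mu^2+2\sigma^2(1-\rho)}.
\end{equation*}

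\textbf{Case $\mu<0$.} This is symmetric. Apply Cantelli to $-D$, whose mean is $|\mu|$, to bound $\mathbb{P}(D>0)$ by the same quantity, and use $\mathbb{P}(D<0)\le 1$. Since the statement only involves $\mu^2$, the sign convention on $\mu$ (the main text uses $|\cdot|$) is immaterial.

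\textbf{Case $\mu = 0$.} Here the right-hand side equals $2$ whenever $v>0$. The bound then holds trivially, because $2pq \le 1/2$ for $p+q\le 1$.

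The only delicate point I expect is the degenerate situation $v=0$ (i.e.\ $\rho=1$) together with $\mu=0$, where the right-hand side is $0/0$. I would handle it by a remark rather than within the main argument. In that case $D=0$ almost surely, so $\mathbb{P}(\mathcal{R})=0$ and any reasonable convention for the right-hand side works. If $v=0$ but $\mu\neq 0$, then $D=\mu$ almost surely, the bound evaluates to $0$, and it is exact. I would also note, as a side remark, that the argument actually delivers the stated bound directly, and that the slack from discarding the factor $\mathbb{P}(D>0)\le 1$ could be tightened but is not needed. Finally, I would state the consequence that as $\rho\to 1$ with $\mu\neq 0$ fixed, the bound tends to $0$, which matches the Implication paragraph after Theorem~\ref{thm:reversal_main}.
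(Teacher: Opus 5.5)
Your proof is correct and follows the paper's argument: the decomposition $\mathbb{P}(\mathcal{R}(i,i'))=2\,\mathbb{P}(D>0)\,\mathbb{P}(D<0)$, bounding the factor on the mean-trend side by $1$, then applying Cantelli with $t=|\mu|$ and substituting $\mathrm{Var}(D)=2\sigma^2(1-\rho)$. Your separate handling of $\mu=0$ and $\rho=1$ is a small addition; the paper covers these only implicitly through its ``without loss of generality $\mu\ge 0$'' step.
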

\begin{proof}
    From Lemma~\ref{lem:reversal_decompose}, $\mathbb{P}(\mathcal{R})=2\mathbb{P}(D>0)\mathbb{P}(D<0)\le 2\min\{\mathbb{P}(D>0),\mathbb{P}(D<0)\}$.
    Without loss of generality, assume $\mu\ge 0$. Then $\mathbb{P}(D<0)$ is the tail probability.
    \[
    \mathbb{P}\big(\mathcal{R}(i,i')\big)\le 2\,\mathbb{P}(D\le 0)=2\,\mathbb{P}(D-\mu\le -\mu).
    \]
    By Cantelli's inequality~\cite{cantelli1929sui}:
    \[
    \mathbb{P}(D-\mu\le -\mu)\le \frac{\mathrm{Var}(D)}{\mathrm{Var}(D)+\mu^2}.
    \]
    Calculating the bound:
    \[
    \mathbb{P}\big(\mathcal{R}(i,i')\big) \le \frac{2\,\mathrm{Var}(D)}{\mathrm{Var}(D)+\mu^2}.
    \]
    Substituting $\mathrm{Var}(D)=2\sigma^2(1-\rho)$ from Lemma~\ref{lem:var_gap}:
    \[
    \mathbb{P}\big(\mathcal{R}(i,i')\big) \le \frac{4\sigma^2(1-\rho)}{\mu^2+2\sigma^2(1-\rho)}.
    \]
\end{proof}

\section{Proofs for Section~\ref{subsec:limit2-ii} (Item-Item Similarity Limitation)}\label{app:proofs_limit2}

In this section, we provide the theoretical justification for the limitation on item-item similarity modeling discussed in Section~\ref{subsec:limit2-ii}.

\subsection{Tree Distance as an Ultrametric}

\begin{lemma}[Ultrametric inequality]\label{lem:ultrametric}
    The tree distance $d_{\mathcal{T}}$ defined in Definition~\ref{def:tree_distance} satisfies the ultrametric inequality. For any three items $i_1, i_2, i_3 \in \mathcal{I}$:
    \begin{equation}
        d_{\mathcal{T}}(i_1, i_3) \le \max\left(d_{\mathcal{T}}(i_1, i_2), d_{\mathcal{T}}(i_2, i_3)\right).
    \end{equation}
\end{lemma}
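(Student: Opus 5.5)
We prove the ultrametric inequality by translating tree distances into common-prefix lengths. Write $k(i,i')$ for the length of the longest common prefix of the semantic IDs of $i$ and $i'$. By Definition~\ref{def:tree_distance}, $d_{\mathcal{T}}(i,i') = 2(m - k(i,i'))$. This is a strictly decreasing affine function of $k$. Hence the claimed inequality is equivalent to
\begin{equation*}
k(i_1,i_3) \ge \min\left(k(i_1,i_2),\, k(i_2,i_3)\right),
\end{equation*}
since $\max$ of the distances corresponds to $\min$ of the prefix lengths.

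To establish this prefix inequality, we set $k^\ast = \min(k(i_1,i_2), k(i_2,i_3))$. Denote the semantic IDs of $i_1, i_2, i_3$ by $(c_1^{(j)})$, $(c_2^{(j)})$, $(c_3^{(j)})$.

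1. Since $k(i_1,i_2) \ge k^\ast$, we have $c_1^{(j)} = c_2^{(j)}$ for every $j \le k^\ast$.
2. Since $k(i_2,i_3) \ge k^\ast$, we have $c_2^{(j)} = c_3^{(j)}$ for every $j \le k^\ast$.
3. By transitivity of equality of tokens, $c_1^{(j)} = c_3^{(j)}$ for all $j \le k^\ast$.

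So $i_1$ and $i_3$ share a common prefix of length at least $k^\ast$. Because $k(i_1,i_3)$ is the longest such prefix, $k(i_1,i_3) \ge k^\ast$. Multiplying by $-2$ and adding $2m$ then gives the inequality in Lemma~\ref{lem:ultrametric}.

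Two edge cases need a brief check rather than real work. If two of the three items coincide, the corresponding prefix length is $m$ and the distance is $0$, and the argument above goes through unchanged. If $k^\ast = 0$, the condition in the steps above is vacuous and the bound $k(i_1,i_3) \ge 0$ holds trivially.

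Alternatively, one could argue in $\mathcal{T}$ itself. The path between two leaves passes through their lowest common ancestor, which lies at depth $k$. The deeper of the two ancestors $\mathrm{lca}(i_1,i_2)$ and $\mathrm{lca}(i_2,i_3)$ lies on the root-to-$i_2$ path, and so does the shallower one. The shallower of these is therefore a common ancestor of $i_1$ and $i_3$. Either way there is no real obstacle. The only point requiring care is the direction flip between $\max$ over distances and $\min$ over prefix lengths induced by the decreasing map $k \mapsto 2(m-k)$.
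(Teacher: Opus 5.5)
Your proposal is correct and follows essentially the same route as the paper. The paper also rewrites $d_{\mathcal{T}}$ as $2(m-\mathrm{depth}(\mathrm{LCA}))$, where the LCA depth is exactly your common-prefix length $k$, invokes $\mathrm{depth}(\mathrm{LCA}(i_1,i_3)) \ge \min(\mathrm{depth}(\mathrm{LCA}(i_1,i_2)), \mathrm{depth}(\mathrm{LCA}(i_2,i_3)))$, and converts back through the decreasing map; the paper only asserts that inequality as a ``path intersection property,'' whereas your transitivity-of-tokens argument (or your ancestor argument along the root-to-$i_2$ path) actually proves it.
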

\begin{proof}
    Let $v_1, v_2, v_3$ be the leaf nodes corresponding to items $i_1, i_2, i_3$ in the decoding tree $\mathcal{T}$. The tree distance $d_{\mathcal{T}}(i, i')$ is determined by the depth of the lowest common ancestor (LCA) of the two leaves. specifically, if the tree has depth $m$, then $d_{\mathcal{T}}(i, i') = 2(m - \mathrm{depth}(\mathrm{LCA}(i, i')))$.
    
    Consider the computed LCAs for the pairs $(i_1, i_2)$ and $(i_2, i_3)$. In any tree structure, the path intersection property dictates that:
    \[
    \mathrm{depth}(\mathrm{LCA}(i_1, i_3)) \ge \min(\mathrm{depth}(\mathrm{LCA}(i_1, i_2)), \mathrm{depth}(\mathrm{LCA}(i_2, i_3))).
    \]
    Substituting the distance definition into this inequality:
    \begin{align*}
        2(m - \frac{d_{\mathcal{T}}(i_1, i_3)}{2}) &\ge \min\left(2(m - \frac{d_{\mathcal{T}}(i_1, i_2)}{2}), 2(m - \frac{d_{\mathcal{T}}(i_2, i_3)}{2})\right) \\
        m - \frac{d_{\mathcal{T}}(i_1, i_3)}{2} &\ge m - \max\left(\frac{d_{\mathcal{T}}(i_1, i_2)}{2}, \frac{d_{\mathcal{T}}(i_2, i_3)}{2}\right) \\
        \frac{d_{\mathcal{T}}(i_1, i_3)}{2} &\le \max\left(\frac{d_{\mathcal{T}}(i_1, i_2)}{2}, \frac{d_{\mathcal{T}}(i_2, i_3)}{2}\right).
    \end{align*}
    Multiply by 2, we obtain:
    \[
    d_{\mathcal{T}}(i_1, i_3) \le \max(d_{\mathcal{T}}(i_1, i_2), d_{\mathcal{T}}(i_2, i_3)).
    \]
\end{proof}

\subsection{Correlation and Item Similarity}

Before proving the main theorem, we establish the connection between the Pearson correlation coefficient $\rho(i, i')$ and item-item similarity in collaborative filtering.

In collaborative filtering~\cite{rendle2009bpr,sarwar2010itemcf}, item-item similarity is commonly modeled as the inner product of item representations derived from the user-item preference matrix. Let $\mathbf{P} \in \mathbb{R}^{|\mathcal{I}| \times |\mathcal{U}|}$ denote the matrix where entry $P_{i,u} = \mathbb{P}(i \mid u)$ represents the generation probability of item $i$ for user $u$. Each row $\mathbf{P}_{i,:}$ captures how different users respond to item $i$.

The Pearson correlation coefficient between items $i$ and $i'$ is defined as:
\[
\rho(i, i') = \mathrm{Corr}(\mathbf{P}_{i,:}, \mathbf{P}_{i',:}) = \frac{\mathbb{E}_{u}[(P_{i,u} - \bar{P}_i)(P_{i',u} - \bar{P}_{i'})]}{\sigma_i \sigma_{i'}},
\]
where $\bar{P}_i = \mathbb{E}_u[P_{i,u}]$ is the mean generation probability and $\sigma_i$ is the standard deviation. This correlation coefficient is mathematically equivalent to the inner product of the normalized preference vectors:
\[
\rho(i, i') = \left\langle \frac{\mathbf{P}_{i,:} - \bar{P}_i}{\sigma_i}, \frac{\mathbf{P}_{i',:} - \bar{P}_{i'}}{\sigma_{i'}} \right\rangle.
\]

This formulation reveals that high correlation $\rho(i, i') \approx 1$ indicates that items $i$ and $i'$ have similar user preference patterns, which is connected to item-item similarity in collaborative filtering. Conversely, low or negative correlation suggests dissimilar preference patterns. This connection allows us to interpret the structural constraints imposed by the tree distance (via Assumption~\ref{assump:correlation}) as constraints on the model's ability to capture flexible item-item similarities.

\subsection{Proof of Limitation on Intransitive Similarity}

Here we provide the proof for Theorem~\ref{thm:forced_transitivity} in Section~\ref{subsec:limit2-ii}, building on the established relationship between correlation and item similarity.

\begin{proof}[Proof of Theorem~\ref{thm:forced_transitivity}]
    By Assumption~\ref{assump:correlation}, the correlation $\rho(i, i')$ is monotonically related to the tree distance $d_{\mathcal{T}}(i, i')$. Assume the model successfully captures the similarities for the first two pairs:
    \begin{enumerate}
        \item $i_1 \sim i_2 \implies \rho(i_1, i_2) > \tau \implies d_{\mathcal{T}}(i_1, i_2) \le \delta$.
        \item $i_2 \sim i_3 \implies \rho(i_2, i_3) > \tau \implies d_{\mathcal{T}}(i_2, i_3) \le \delta$.
    \end{enumerate}
    Using the ultrametric property from Lemma~\ref{lem:ultrametric}:
    \[
    d_{\mathcal{T}}(i_1, i_3) \le \max(d_{\mathcal{T}}(i_1, i_2), d_{\mathcal{T}}(i_2, i_3)) \le \max(\delta, \delta) = \delta.
    \]
    Since $d_{\mathcal{T}}(i_1, i_3) \le \delta$, by the bidirectional relationship between correlation and tree distance, we have:
    \[
    \rho(i_1, i_3) > \tau.
    \]
    This implies $i_1$ and $i_3$ must exhibit a high level of similarity (correlation $>\tau$), making it impossible for the model to treat them as dissimilar (which would require $\rho \le \tau$). Thus, the tree structure imposes a transitivity constraint on the learned similarities.
\end{proof}

\begin{table}[!t]
    \centering
    \caption{Best hyperparameters for the base model PSID across three datasets.}
    \label{tab:hyperparams_psid}
    \begin{tabular}{lccc}
    \toprule
    Hyperparameter & Instruments & Scientific & Games \\
    \midrule
    Beam size & 500 & 500 & 500 \\
    Learning rate & $3\times 10^{-3}$ & $1\times 10^{-3}$ & $3\times 10^{-3}$ \\
    \bottomrule
    \end{tabular}
\end{table}

\begin{table}[!t]
    \centering
    \caption{Best hyperparameters for our model Latte across three datasets.}
    \label{tab:hyperparams_latte}
    \begin{tabular}{lccc}
    \toprule
    Hyperparameter & Instruments & Scientific & Games \\
    \midrule
    Number of latent tokens $M'$ & 4 & 8 & 8 \\
    Beam size & 500 & 500 & 500 \\
    Learning rate & $3\times 10^{-3}$ & $3\times 10^{-3}$ & $3\times 10^{-3}$ \\
    \bottomrule
    \end{tabular}
\end{table}

\subsection{Expressiveness of Latte}\label{app:express_latte}

We now show that Latte relaxes the rank-reversal constraint in~\Cref{thm:reversal_main}.
Recall that for a pair of items $(i,i')$, the rank-reversal probability in standard GR is bounded by
\begin{equation}
    \mathbb{P}(\mathcal{R}(i,i')) 
    \le 
    B(\rho)
    \triangleq
    \frac{4\sigma^2(1-\rho)}{\mu^2 + 2\sigma^2(1-\rho)},
    \label{eqn:rank_reversal_bound}
\end{equation}
where $\rho=\rho(i,i')$ denotes the correlation between the generation probabilities of $i$ and $i'$ across users.
As discussed in~\Cref{subsec:correlation}, structurally close items in standard GR tend to have large $\rho$, which makes $B(\rho)$ small and suppresses cross-user rank reversals.

\begin{proposition}[Latte relaxes the rank-reversal constraint]\label{prop:latte_relax_bound}
    Consider a pair of structurally close items $(i,i')$ with correlation $\rho$ in standard GR.
    Suppose Latte uses $M'$ latent tokens and the dominant latent token for each item is approximately uniformly distributed, \ie
    \[
        \mathbb{P}\big(\ell^*(i,u)=\ell^*(i',u)\big)=\frac{1}{M'}.
    \]
    Let $\rho_{\mathrm{low}}$ denote the correlation between two items whose generation paths diverge at the latent-token level, with $\rho_{\mathrm{low}} < \rho$.
    Then the effective correlation between $i$ and $i'$ under Latte satisfies
    \begin{equation}
        \rho_{\mathrm{Latte}}
        \approx
        \frac{1}{M'}\rho
        +
        \left(1-\frac{1}{M'}\right)\rho_{\mathrm{low}}
        < \rho,
        \label{eqn:latte_effective_corr}
    \end{equation}
    for any $M'>1$. Consequently, when $\mu>0$, Latte loosens the rank-reversal bound:
    \begin{equation}
        B(\rho_{\mathrm{Latte}}) > B(\rho).
    \end{equation}
\end{proposition}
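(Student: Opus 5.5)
The argument has two parts: a heuristic mixture computation for the approximate identity \eqref{eqn:latte_effective_corr}, and a rigorous monotonicity argument for the bound $B$ from \Cref{thm:reversal_main} (restated in \eqref{eqn:rank_reversal_bound}).

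\textbf{Step 1: effective correlation as a mixture.} Condition on the event $E=\{\ell^*(i,u)=\ell^*(i',u)\}$. By the definition of $d_{\text{eff}}$ in \Cref{subsec:latte_discussion}, on $E$ the two items share the latent token and the SID prefix. Their probabilities therefore factor exactly as in \Cref{eqn:p_item_i,eqn:p_item_i2}, with the extra common factor $\mathbb{P}(\ell\mid u)$, and their co-movement is the same as in standard GR, so the conditional correlation is $\rho$. On $E^c$ the paths diverge at the latent level, giving effective distance $2(m+1)$, which is larger than any distance in $\mathcal{T}$. By \Cref{assump:correlation}, the correlation there is the lower value $\rho_{\mathrm{low}}$.

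The key modeling step is the treatment of the unconditional correlation. I will approximate it by the law of total covariance, keeping only the within-regime covariance terms. To do this I assume that the per-item variance $\sigma^2$ is the same on both regimes (in the spirit of \Cref{assump:scale}) and that the between-regime term, the covariance of the conditional means, is negligible. Then $\mathrm{Cov}\approx \mathbb{P}(E)\rho\sigma^2+\mathbb{P}(E^c)\rho_{\mathrm{low}}\sigma^2$. Dividing by $\sigma^2$ and substituting $\mathbb{P}(E)=1/M'$ gives \eqref{eqn:latte_effective_corr}. This is the step I expect to be the main obstacle. It is only an approximation, which is why the statement uses $\approx$, and I will state the neglected cross term explicitly as the approximation being made.

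\textbf{Step 2: strict decrease.} Write $\rho_{\mathrm{Latte}}-\rho=(1-1/M')(\rho_{\mathrm{low}}-\rho)$. For $M'>1$ the factor $1-1/M'$ is positive, and $\rho_{\mathrm{low}}-\rho<0$ by hypothesis, so $\rho_{\mathrm{Latte}}<\rho$.

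\textbf{Step 3: monotonicity of $B$.} Set $x=2\sigma^2(1-\rho)\ge 0$, so that $B=2x/(\mu^2+x)$. Then
\[
\frac{dB}{dx}=\frac{2\mu^2}{(\mu^2+x)^2},
\]
which is strictly positive when $\mu>0$. Since $x$ is strictly decreasing in $\rho$, $B$ is strictly decreasing in $\rho$. Combining this with Step 2 gives $B(\rho_{\mathrm{Latte}})>B(\rho)$.

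The hypothesis $\mu>0$ is exactly what is needed here: if $\mu=0$, then $B\equiv 2$ is constant and the strict inequality fails. I also hold $\mu$ and $\sigma^2$ fixed between GR and Latte, as the statement implicitly does, and I will note this assumption explicitly.
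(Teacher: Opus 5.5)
Your proposal follows the paper's proof essentially step for step: the same two-case split on whether $\ell^*(i,u)=\ell^*(i',u)$, the mixture weights $1/M'$ and $1-1/M'$, the identity $\rho_{\mathrm{Latte}}-\rho=(1-1/M')(\rho_{\mathrm{low}}-\rho)<0$, and strict monotonicity of $B$ for $\mu>0$. Your substitution $x=2\sigma^2(1-\rho)$ is equivalent to the paper's direct computation $\partial B/\partial\rho=-4\sigma^2\mu^2/(\mu^2+2\sigma^2(1-\rho))^2$. The one real difference is that the paper simply asserts the mixture as the ``expected effective correlation,'' whereas you justify it through the law of total covariance. If you keep that route, dividing by $\sigma^2$ also requires the variance of the conditional means across the two regimes to be negligible, so that the unconditional variance stays $\sigma^2$. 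Assuming only that their covariance is negligible is not enough.
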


\begin{proof}
    In standard GR, the tree distance $d_{\mathcal{T}}(i,i')$ between two items is fixed. Thus, for structurally close items, their generation probabilities share a long prefix in the decoding tree, inducing a high correlation $\rho$ across users.

    In Latte, the effective generation path also depends on the latent token selected for each item and user. Let
    \[
        \ell^*(i,u)
        =
        \argmax_{\ell}
        \mathbb{P}(\ell\mid u)\mathbb{P}(i\mid \ell,u)
    \]
    denote the dominant latent token for generating item $i$ given user $u$.
    For the pair $(i,i')$, there are two cases.

    First, if $\ell^*(i,u)=\ell^*(i',u)$, the two items are generated under the same latent path. In this case, their effective tree distance remains the original distance $d_{\mathcal{T}}(i,i')$, and their correlation remains close to the standard GR correlation $\rho$.

    Second, if $\ell^*(i,u)\neq \ell^*(i',u)$, the two generation paths diverge immediately after the root through different latent tokens. Their effective distance becomes
    \[
        d_{\mathrm{eff}}(i,i';u)=2(m+1),
    \]
    which is the maximum distance in the latent-augmented decoding tree. By~\Cref{assump:correlation}, larger tree distance corresponds to lower correlation in generation probabilities. We denote this lower correlation by $\rho_{\mathrm{low}}$, where $\rho_{\mathrm{low}}<\rho$.

    Since the latent tokens are approximately uniformly assigned, the probability that two items use the same dominant latent token is $1/M'$, while the probability that they use different dominant latent tokens is $1-1/M'$. Therefore, the expected effective correlation under Latte can be written as
    \[
        \rho_{\mathrm{Latte}}
        \approx
        \frac{1}{M'}\rho
        +
        \left(1-\frac{1}{M'}\right)\rho_{\mathrm{low}}.
    \]
    Because $M'>1$ and $\rho_{\mathrm{low}}<\rho$, we have
    \begin{align}
        \rho_{\mathrm{Latte}} - \rho
        &=
        \frac{1}{M'}\rho
        +
        \left(1-\frac{1}{M'}\right)\rho_{\mathrm{low}}
        - \rho \nonumber\\
        &=
        \left(1-\frac{1}{M'}\right)(\rho_{\mathrm{low}}-\rho)
        < 0.
    \end{align}
    Thus, $\rho_{\mathrm{Latte}}<\rho$.

    It remains to show that this lower correlation gives a looser rank-reversal bound. Taking the derivative of $B(x)$ in~\Cref{eqn:rank_reversal_bound}, we obtain
    \begin{align}
        B'(x)
        &=
        \frac{\partial}{\partial x}
        \frac{4\sigma^2(1-x)}{\mu^2+2\sigma^2(1-x)} \nonumber\\
        &=
        -\frac{4\sigma^2\mu^2}
        {\left(\mu^2+2\sigma^2(1-x)\right)^2}.
    \end{align}
    When $\mu>0$, we have $B'(x)<0$, so $B(x)$ is strictly decreasing in $x$.
    Since $\rho_{\mathrm{Latte}}<\rho$, it follows that
    \[
        B(\rho_{\mathrm{Latte}}) > B(\rho).
    \]
    Therefore, Latte relaxes the rank-reversal constraint imposed by high structural correlation. Intuitively, by allowing structurally close items to take different latent paths, Latte reduces their effective correlation and gives the model more flexibility to express user-specific rank reversals.
\end{proof}

\paragraph{Uniform latent-token generation.}
In the above analysis, we assume that the latent tokens are approximately uniformly selected during generation. This assumption is motivated by our training design, where latent tokens are uniformly sampled and prepended before the semantic IDs. Under this assumption, when the latent vocabulary size is $M'$, each latent token is selected with probability approximately $1/M'$. Therefore, for two items $i$ and $i'$, the probability that they share the same dominant latent path is approximately $1/M'$, while the probability that they diverge at the latent-token level is approximately $1-1/M'$. For the Games dataset with $M'=8$, the expected generation probability of each latent token is thus $1/8=0.125$, as shown in~\Cref{tab:latent_prob_games}.

\begin{table}[t]
\small
\centering
\caption{Latent-token generation probabilities on the Games dataset under the uniform latent-token generation assumption.}
\label{tab:latent_prob_games}
\begin{tabular}{lc}
\toprule
Latent Token & Generation Probability \\
\midrule
$\ell_1$ & 0.1244 \\
$\ell_2$ & 0.1253 \\
$\ell_3$ & 0.1248 \\
$\ell_4$ & 0.1243 \\
$\ell_5$ & 0.1259 \\
$\ell_6$ & 0.1254 \\
$\ell_7$ & 0.1251 \\
$\ell_8$ & 0.1248 \\
\bottomrule
\end{tabular}
\end{table}

\section{Implementation Details}\label{app:exp-impl}

\paragraph{Compared models.} We compare Latte with the following baselines: (1) traditional sequential recommendation models that primarily based on item IDs (and feature IDs), including GRU4Rec~\cite{hidasi2016gru4rec}, BERT4Rec~\cite{sun2019bert4rec}, SASRec~\cite{kang2018sasrec}, FMLP-Rec~\cite{zhou2022fmlp}, HSTU~\cite{zhai2024hstu}, FDSA~\cite{zhang2019fdsa}, and S$^3$-Rec~\cite{zhou2020s3rec}; and (2) generative recommendation models based on semantic IDs, including TIGER~\cite{rajput2023tiger}, LETTER~\cite{wang2024letter}, ActionPiece~\cite{hou2025actionpiece}, and PSID~\cite{zhang2025psid}.

We adhere to the exact same experimental settings and directly adopt the reported results for most baselines from prior works~\cite{zheng2025mtgrec,zhong2025pctx}. For the base model, PSID~\cite{zhang2025psid}, we refer to the officially released code\footnote{\url{https://github.com/wangshanyw/PurelySemanticIndexing}} to implement the model.

\paragraph{Evaluation.} We evaluate all models using the widely adopted Recall@K (R@K) and NDCG@K (N@K) metrics, with $K \in \{5, 10\}$, following prior works~\cite{rajput2023tiger,zheng2025mtgrec}. We tune hyperparameters based on NDCG@10 performance on the validation set and select the best-performing checkpoint for testing.

\paratitle{Latte.} We build Latte upon our base model, PSID~\cite{zhang2025psid}. Specifically, we adopt the architecture of the representative model TIGER~\cite{rajput2023tiger}, maintaining the same configuration: a T5-style encoder-decoder, 4 stacked Transformer blocks in both the encoder and decoder, and an embedding dimension of $128$. Regarding tokenization, we follow the recommendations of Ju et al. (2025)~\citep{ju2025generative} to primarily use RQ K-Means, though we also experiment with other tokenizers such as RQ-VAE and OPQ (see~\Cref{tab:tokenization}). To prevent semantic ID conflicts, we employ the ESM algorithm as proposed in the PSID paper~\cite{zhang2025psid}. The resulting semantic ID length is $m=3$ with a vocabulary size of $M=256$ for each token position. The number of introduced latent tokens is tuned from the set $\{2, 4, 8\}$ based on validation performance. For the Latte results reported in~\Cref{tab:main_res}, we use $\max$ as the aggregation method.

\paratitle{Training and inference.} We tune the learning rate for both the base model PSID and our model, Latte, from the set $\{1\times 10^{-3}, 3\times 10^{-3}\}$. We use a weight decay of $0.05$. Models are trained for up to 150 epochs, employing early stopping with a patience of 20 epochs. Hyperparameters are selected based on the best NDCG@10 performance on the validation set, and the best-performing checkpoint is used for testing. During inference, we tune the beam size from $\{50, 100, 500\}$, also based on validation performance. All experiments were conducted on a single NVIDIA A6000 GPU.

Best hyperparameters for both PSID and Latte across the three datasets are summarized in~\Cref{tab:hyperparams_psid,tab:hyperparams_latte}.

\begin{table}[t]
\small
\centering
\caption{Statistics of structurally coupled items under PSID with RQ K-Means tokenization. We report the average number of items within tree distance 2 and 4 from each target item, as well as the ratio of test instances whose target item has at least one item at tree distance 2.}
\label{tab:coupling_statistics}
\vspace{1em}
\resizebox{0.75\linewidth}{!}{
\begin{tabular}{lccc}
\toprule
Dataset & \#Items at Dist. 2 & \#Items at Dist. 4 & Ratio with Dist.-2 Items \\
\midrule
Instruments & 6.29 & 142.02 & 83.98\% \\
Scientific  & 7.39 & 153.97 & 85.25\% \\
Games       & 6.98 & 179.41 & 84.17\% \\
\bottomrule
\end{tabular}
}
\end{table}

\section{Discussion}

\paragraph{Why should semantically similar items not always exhibit correlated generation probabilities?}
We emphasize that semantic similarity and correlated generation probabilities are related but distinct. Assigning similar semantic IDs to semantically similar items is a common design choice in current GR models, as it allows related items to share token prefixes. However, this design choice becomes limiting when shared prefixes force semantically similar items to have highly correlated generation probabilities across users. In practice, users may have opposite preferences toward semantically similar items. For example, in the Games dataset, ``Pokémon Scarlet (Switch)'' and ``Pokémon Sun (3DS)'' have a tree distance of 2, meaning that only their last semantic ID token differs. While some users may prefer the newer game and favor Scarlet, others may prefer the older style and favor Sun. Standard GR models struggle to represent such ranking reversals, as discussed in~\Cref{subsec:limit1-ui}, and tend to assign one item a consistently higher probability than the other for most users. This behavior conflicts with the personalized nature of recommendation, where even semantically similar items should be distinguishable based on user-specific preferences.

\paratitle{How significant is the identified issue?} To assess how frequently the identified structural coupling occurs in practice, we analyze the base PSID model with RQ K-Means tokenization, where each item is represented by three tokens. For each target item in the test set, we count the average number of other items within tree distance 2 and 4. As shown in~\Cref{tab:coupling_statistics}, each target item is associated with around 6--7 closely coupled items at tree distance 2 and 140--180 items at tree distance 4. We further compute the ratio of test instances whose target item has at least one other item at tree distance 2. The ratio is consistently above 80\% across all datasets, indicating that the coupling effect is not a rare corner case, but a pervasive structural issue in autoregressive SID generation.

\paragraph{Is the latent token design necessary or optimal?}
We do not claim that the latent token design in Latte is necessary or optimal. Rather, our goal is to analyze the expressive limitations of current GR models and show that these limitations can be alleviated by relaxing the rigid decoding tree structure. Latte is one simple instantiation of this idea: by introducing an additional latent token before the semantic ID tokens, it provides multiple latent-conditioned generation paths and therefore reduces the structural coupling induced by a single fixed tree. We view Latte as an initial attempt rather than a final solution. We hope this analysis can motivate future work on more expressive and efficient designs for semantic ID-based generative recommendation.

\paratitle{When does Latte degrade to the base model?}
Latte alleviates prefix coupling by introducing multiple latent-conditioned generation paths. Although two items remain coupled when they are generated under the same latent token, different latent tokens do not simply reweight the same shared prefix probability. For example, given two latent tokens $l_A$ and $l_B$, the probabilities of generating the same semantic prefix are
\begin{align*}
    P(l_A \mid x) & P(\mathrm{prefix} \mid x, l_A), \\ 
    P(l_B \mid x) & P(\mathrm{prefix} \mid x, l_B).
\end{align*}
   
Even ignoring the path weights $P(l_A \mid x)$ and $P(l_B \mid x)$, the conditional prefix probabilities can differ because they are conditioned on different latent tokens, i.e., $P(\mathrm{prefix} \mid x, l_A) \neq P(\mathrm{prefix} \mid x, l_B)$ in general. Thus, Latte provides additional flexibility beyond assigning different weights to the same prefix. It would reduce to the base model \textbf{only in the extreme case} where different latent tokens induce identical representations or identical conditional generation distributions.

\begin{table}[t]
\small
\centering
\caption{Inference time for one epoch on the Instruments dataset.}
\label{tab:inference_time}
\begin{tabular}{lc}
\toprule
Model & Time \\
\midrule
TIGER & 98s \\
PSID & 66s \\
Latte ($M'=4$) & 76s \\
Latte ($M'=8$) & 76s \\
Latte ($M'=16$) & 78s \\
\bottomrule
\end{tabular}
\end{table}

\begin{table}[t]
\small
\centering
\caption{Comparison with noise-based data augmentation baselines. We report NDCG@10 results.}
\label{tab:noise_aug}
\begin{tabular}{lccc}
\toprule
Model & Instruments & Scientific & Games \\
\midrule
PSID & 0.0325 & 0.0235 & 0.0500 \\
Dropout & 0.0322 & 0.0239 & 0.0499 \\
Swap & 0.0300 & 0.0208 & 0.0462 \\
Latte & \textbf{0.0331} & \textbf{0.0249} & \textbf{0.0515} \\
\bottomrule
\end{tabular}
\end{table}

\begin{table}[t]
\small
\centering
\caption{Kendall's rank correlation between tree distance and item-item similarity for noise-based augmentation baselines. Less negative values indicate weaker structural coupling.}
\label{tab:noise_kendall}
\begin{tabular}{lccc}
\toprule
Model & Instruments & Scientific & Games \\
\midrule
PSID & -0.6225 & -0.4611 & -0.6072 \\
Dropout & -0.6248 & -0.4687 & -0.6025 \\
Swap & -0.6277 & -0.4887 & -0.6096 \\
Latte & \textbf{-0.6030} & \textbf{-0.4451} & \textbf{-0.5958} \\
\bottomrule
\end{tabular}
\end{table}

\paratitle{Does Latte introduce significant inference overhead?}
Latte does not require $|\mathcal{L}|$ independent beam searches over latent tokens. Instead, it uses a single beam search process, where the model first predicts a latent token and then continues generating the semantic ID tokens. Therefore, introducing latent tokens only adds one extra decoding step, rather than multiplying the inference cost by the number of latent tokens. In practice, the aggregation is performed over the candidates retained by beam search, rather than over all possible latent tokens. To quantify this overhead, we compare the inference time for one epoch on the Instruments dataset. TIGER uses 4 semantic ID tokens per item, PSID uses 3 semantic ID tokens per item, and Latte uses 1 latent token followed by 3 semantic ID tokens. As shown in \Cref{tab:inference_time}, Latte moderately increases the inference cost compared with its base model PSID, but remains more efficient than TIGER.

\paragraph{Does Latte mainly improve performance by introducing noise?}
To examine whether Latte's improvement mainly comes from a noise-like regularization effect, we compare it with two data augmentation baselines: Dropout~\cite{qiu2022duorec} and Swap~\cite{xie2022cl4rec}. During each training epoch, these baselines randomly drop or swap a certain ratio of tokens in the input sequence, with the ratio tuned from $\{0.1, 0.2, 0.3\}$. As shown in \Cref{tab:noise_aug}, these augmentation methods can sometimes improve over the base PSID model, but they remain consistently worse than Latte. We further compute Kendall's rank correlation between tree distance and item-item similarity, following the analysis in \Cref{tab:kendall_corr}. As shown in \Cref{tab:noise_kendall}, only Latte substantially improves the correlation, indicating that its gains do not merely come from injecting noise, but from alleviating the structural probability coupling discussed in~\Cref{sec:analysis}.

\section{Limitations}\label{sec:limitations}

Latte is proposed as a simple example to alleviate the expressive limitations discussed in this paper, rather than as an optimal solution. It still has several known limitations. First, Latte introduces additional inference cost, since it requires one extra decoding step to generate the latent token before the original semantic ID tokens. Although this overhead is moderate in our experiments, it may still matter in latency-sensitive recommendation systems. Second, Latte may degenerate to the base model in extreme cases where different learned latent token embeddings become nearly identical, causing different latent-conditioned paths to induce similar generation distributions. Third, the latent token design increases the flexibility of the decoding process but does not completely remove the structural constraints imposed by semantic ID generation. Items generated under the same latent-conditioned path may still exhibit prefix coupling. Despite these limitations, they do not affect our main findings on the expressive limitations of current GR models, including rank reversal (\Cref{subsec:limit1-ui}) and forced transitivity (\Cref{subsec:limit2-ii}). We hope these findings motivate future work on more expressive and efficient designs for semantic ID-based generative recommendation.

\section{Societal Impacts}\label{app:impact}

This work aims to improve the expressiveness of generative recommendation models in capturing user-item preferences and item-item relationships. As our contribution focuses on the modeling paradigm rather than a specific application domain, dataset, or deployment setting, we do not identify additional societal impacts beyond those generally associated with the development and deployment of recommender systems. In practice, such systems may influence information access, user behavior, and content exposure, and should therefore be deployed with appropriate consideration of fairness, transparency, privacy, and potential feedback loops.

\end{document}